\providecommand{\email}[1]{\href{mailto:#1}{\nolinkurl{#1}\xspace}}
 \pgfplotsset{compat=1.5}
  \newcommand{\eps}[0]{\ensuremath{\varepsilon}}
  \let\epsilon\eps
  \newcommand{\cA}{\ensuremath{{\mathcal A}}\xspace}
  \newcommand{\cB}{\ensuremath{{\mathcal B}}\xspace}
  \newcommand{\cC}{\ensuremath{{\mathcal C}}\xspace}
  \newcommand{\cP}{\ensuremath{{\mathcal P}}\xspace}
  \newcommand{\cQ}{\ensuremath{{\mathcal Q}}\xspace}
  \newcommand{\cR}{\ensuremath{{\mathcal R}}\xspace}
  \newcommand{\cS}{\ensuremath{{\mathcal S}}\xspace}
  \newcommand{\cX}{\ensuremath{{\mathcal X}}\xspace}
  \newcommand{\bbF}{\ensuremath{{\mathbb F}}\xspace}
  \newcommand{\bbN}{\ensuremath{{\mathbb N}}\xspace}
  \newcommand{\bbR}{\ensuremath{{\mathbb R}}\xspace}
\newtheorem*{rep@theorem}{\rep@title}
\newcommand{\newreptheorem}[2]{%
\newenvironment{rep#1}[1]{%
 \def\rep@title{\theoremref{##1} Restated}%
 \begin{rep@theorem}}%
 {\end{rep@theorem}}}
\newtheorem*{rep@lemma}{\rep@title}
\newcommand{\newreplemma}[2] {%
\newenvironment{rep#1}[1]{%
 \def\rep@title{\lemmaref{##1} Restated}%
 \begin{rep@lemma}}%
 {\end{rep@lemma}}}
\newtheorem{theorem}{Theorem}
\newtheorem{definition}{Definition}
\newtheorem{lemma}{Lemma}
\newtheorem{corollary}[theorem]{Corollary}
\theoremstyle{definition}
\newtheorem*{remark}{Remark}
\newcommand{\namedref}[2]{\hyperref[#2]{#1~\ref*{#2}}\xspace}
\newcommand{\lemmaref}[1]{\namedref{Lemma}{lem:#1}}
\newcommand{\theoremref}[1]{\namedref{Theorem}{thm:#1}}
\title{Differential Privacy and Sublinear Time are Incompatible Sometimes}
\author{Jeremiah Blocki\\Purdue University\\ jblocki@purdue.edu
\and 
Hendrik Fichtenberger\\ 
Google Research\\fichtenberger@google.com
\and 
Elena Grigorescu\\Purdue University \\elena-g@purdue.edu
\and Tamalika Mukherjee\\Columbia University\\ tm3391@columbia.edu
}
\begin{document}

\maketitle
\begin{abstract}
    Differential privacy and sublinear algorithms are both rapidly emerging algorithmic themes in times of big data analysis. Although recent works have shown the existence of differentially private sublinear algorithms for many problems including graph parameter estimation and clustering, little is known regarding hardness results on these algorithms. In this paper, we initiate the study of lower bounds for problems that aim for both differentially-private and sublinear-time algorithms. Our main result is the incompatibility of both the desiderata in the general case. In particular, we prove that a simple problem based on one-way marginals yields both a differentially-private algorithm, as well as a sublinear-time algorithm, but does not admit a  ``strictly'' sublinear-time algorithm that is also differentially private. 
\end{abstract}

\section{Introduction}

While individuals have long demanded privacy-preserving analysis and processing of their data, their adoption and enforcement by governmental and private standards, policies, and jurisdictions are now accelerating. This urgency stems, in part, from the dramatic growth in the amount of data collected, aggregated, and analyzed per individual in recent years. The sheer volume of data also poses a computational challenge as resource demands scale with data size. Thus, it is expedient to develop privacy preserving algorithms for data-analysis whose resource requirements scale sub-linearly in the size of the input dataset. Two algorithmic concepts that formalize these two objectives are \emph{differential privacy} (DP) and \emph{sublinear algorithms}. A randomized algorithm is differentially private if its output distribution does not change significantly when we slightly modify the input dataset to add/remove an individual's data, i.e., a row in the dataset. Sublinear algorithms comprise classes of algorithms that have time or space complexity that is sublinear in their input size. Previous work in the intersection of both fields has promoted several classical sublinear algorithms to differentially private sublinear time algorithms for the same problems, e.g., sublinear-time clustering~\cite{blocki2021differentially}, graph parameter estimation~\cite{blocki2022privately} or sublinear-space heavy hitters in streaming~\cite{pmlr-v97-upadhyay19a,BlockiLMZ23}. 

Intuitively, one might expect that privacy and sublinear time necessarily have a symbiotic relationship, i.e., if only a fraction of the data is processed, a significant amount of sensitive information may remain unread. Recent work~\cite{BlockiGMZ23} demonstrated that if a function $f \colon D \to \bbR$ has low global sensitivity (i.e., $f$ is amenable to DP) and there exists a {\em sufficiently accurate} sublinear-time approximation algorithm for $f$, then there exists an accurate sublinear time DP approximation for $f$. This lead \cite{BlockiGMZ23} to ask whether or not a similar transformation might apply for functions $f \colon D \to \bbR^d$ with multi-dimensional output. In this paper, we provide an example of a function $f \colon D \to \bbR^d$ with the following properties (1) there is an efficient sublinear time approximation algorithm for $f$, (2) there is a differentially private approximation algorithm for $f$ running in time $O(|D|)$, and (3) any accurate differentially private approximation algorithm must run in time $\tilde{\Omega}(|D|)$. Thus, the intuition that privacy and sublinear time algorithms necessarily have a symbiotic relationship is incorrect.

{\bfseries Existing Model for DP Lower Bounds. }Consider a database $D \in \{0,1\}^{n \times d}$ with $n$ rows where each row corresponds to an individual's record with $d$ binary attributes. The model in this case is that the database $D$ consists of a sample (according to a uniform or possibly adversarial distribution) from a larger population (or universe), and we are interested in answering queries on the sample (e.g., what fraction of individual records in $D$ satisfy some property $q$?). One of the interesting questions in DP lower bounds is the following: Suppose we fix a family of queries $\cQ$ and the dimensionality of database records $d$, then what is the sample complexity required to achieve DP and statistical accuracy for $\cQ$? Here the \emph{sample complexity} is defined as the minimum number of records $n$ such that there exists a (possibly computationally unbounded) algorithm that achieves both DP and accuracy. 

A key problem that has been at the center of addressing this question is the \emph{one-way marginal problem}. The one-way marginal problem takes a database $D \in \{0,1\}^{n \times d}$ and releases the average values of all $d$ columns. The best known private algorithm which has running time polynomial in the universe size is based on the multiplicative weights mechanism, and it achieves $(O(1), o(1/n))$-DP for $n \in O(\sqrt{d} \log \vert \cQ\vert)$~\cite{HardtR10}. Any pure DP algorithm for this problem requires $n \geq \Omega(d)$ samples~\cite{HardtT10}, while \cite{BunUV18} showed that $n \geq \tilde{\Omega}(\sqrt{d})$ samples are necessary to solve this problem with approximate DP and within an additive error of 1/3. 

{\bfseries Existing Models for Sublinear-Time Algorithms. } The works on sublinear-time algorithms utilize different input models, many of them tailored to the representation of the input, e.g., whether it is a function or a graph. These models typically define query oracles, i.e., mechanisms to access the input in a structured way. For example, the dense graph model~\cite{goldreich1998property} defines an adjacency query oracle that, given a pair of indices $(i,j)$, returns the entry $A(i,j)$ of the adjacency matrix $A$ of the input graph. Query oracles enable an analysis of what parts, and more generally how much of the input was accessed by an algorithm. Since the fraction of input read is a lower bound for the time complexity of an algorithm, query access models are crucial to prove both sublinear-time upper and lower bounds.

{\bfseries Our Model. }A challenge of proving lower bounds for sublinear time, differentially private algorithms lies in devising and applying a technique for analysis that combines the properties of both models. Lower bounds for differential privacy state a lower bound on the sample complexity that is required to guarantee privacy and non-trivial accuracy. These bounds do not state how much of the input an algorithm needs to \emph{read} to guarantee privacy and accuracy, but only what input size \emph{is required} to (potentially) enable such an algorithm. On the other hand, lower bounds for sublinear time algorithms state a bound on the time complexity as a function of the input size $m$.
Note that time complexity is at least query complexity, and a lower bound on the latter immediately implies a lower bound on time complexity as well.

In our setting, we fix the number of records $n$, as well as the dimensionality $d$ of the database, i.e., our problem size is $m=n\cdot d$. We define queries in our model to be \emph{attribute queries}, i.e., querying the $j$-th attribute of a row $i$ in the database $D$ is denoted as $D(i,j)$. We emphasize that our use of the term \emph{query} in our model is as in the sublinear-time algorithms model, and it is different from its use in conventional DP literature. Specifically, queries in DP literature refer to types of questions that the data analyst can make to the database to infer something about the population, whereas queries in the sublinear algorithms model refer to how the algorithm can access the input dataset $D$. In our work, we fix a problem of interest $\cP$ on database $D$ (e.g., the one-way marginal problem), and we consider an algorithm that solves the problem $\cP$ on input $D$. Then we are interested in understanding the minimum number of (attribute) queries that an algorithm can make to solve the problem $\cP$ and satisfy both DP and accuracy, which we call the \emph{query complexity}.

{\bfseries Result of \cite{BunUV18} does not apply in our model. }For the problem of one-way marginals, we know that $n \in \tilde{\Omega}(\sqrt{d})$~\cite{BunUV18} records are required for any algorithm to achieve both DP and accuracy. For $m=\tilde{\Omega}(d^{3/2})$, there exists a DP algorithm that can solve this problem with $\tilde{O}(m)$ queries, i.e., the algorithm can query the entire dataset and add Gaussian noise. Using Hoeffding bounds one can analyze a simple non-private algorithm with accuracy $1/3$ that has query complexity $O(d \log d)$, which is sublinear in the problem size $m$. However, it is not clear whether $O(m)$ queries are necessary to achieve both DP and accuracy in our model. One might be tempted to directly apply the result of~\cite{BunUV18} to say that $\tilde{\Omega}(m)$ queries are necessary, but this does not work as the results of~\cite{BunUV18} focus on sample complexity. In particular, in our model it would be possible to distribute attribute queries across all rows (making $o(d)$ attribute queries in each row) so that every row is (partially) examined but the total number of queries is still $o(m)$. In particular, a sublinear time algorithm can substantially reduce $\ell_1$ and $\ell_2$ sensitivity by ensuring that the maximum number of queries in each row is $o(d)$. \footnote{ Suppose for example, that $m=d^2$ so that there are $d$ attribute columns and $d$ rows and consider two sublinear time algorithms: Algorithm 1 examines the first $\sqrt{d}$ rows and outputs the marginals for these samples. By contrast, Algorithm 2 uses rows $i\sqrt{d}+1$ to $(i+1)\sqrt{d}$  to compute the marginals columns  $i\sqrt{d}+1$ to $(i+1)\sqrt{d}$ for each $i < \sqrt{d}$. Both algorithms examine the same number of cells in the database $d\sqrt{d}$, but the $\ell_1$ (resp. $\ell_2$) sensitivity of the algorithms are quite different. Algorithm 1 has $\ell_1$ (resp. $\ell_2$) sensitivity $\sqrt{d}$ (resp. $1$) while Algorithm 2 has $\ell_1$ (resp. $\ell_2$) sensitivity $1$ (resp. $d^{-0.25}$). }  

The sample complexity lower bound of~\cite{BunUV18} uses fingerprinting codes to show that the output of an algorithm that is both DP and reasonably accurate for the one-way marginal problem can be used to reidentify an individual record, which contradicts the DP property. Intuitively, fingerprinting codes provide the guarantee that if an algorithm obtains an accurate answer after examining {\em at most} $c$ rows in the database then it is possible to reidentify at least one of the corresponding users. However, if the attacker examines more than $c$ rows, then \emph{we cannot prove that privacy is violated as fingerprinting codes no longer provide the guarantee that we can reidentify one of the corresponding users}. In our model, an algorithm is allowed to make arbitrary attribute queries and is not restricted to querying all attributes corresponding to a fixed row, thus it is more difficult to prove a lower bound of this nature in our model. In particular, instead of sampling $c$ rows an attacker could distribute the attribute queries across all rows (making $c=o(d)$ attribute queries in each row). The total number of cells examined is still $cd$, but the overall coalition has size $d \geq c$. Fingerprinting codes provide no guarantee of being able to trace a colluder since the overall coalition (number of rows in which some query was made) is larger than $c$. Thus, we cannot prove that privacy is violated.



Crucially, their construction relies on the algorithm being able to query the entire row (aka record) of the database and the fact that for a fixed coalition size $c$ a fingerprinting code can trace an individual in any coalition of size $\leq c$ with high probability, as long as the individual \emph{actively} colluded. 


{\bfseries Our Contribution. } We give the first separation between the query complexity of a non-private, sublinear-time algorithm and a DP sublinear-time algorithm (up to a log factor). We remind the reader that a lower bound on query complexity naturally implies a lower bound on time complexity as the time taken by an algorithm must be at least the number of queries made. Thus our theorem on query complexity also gives a lower bound for sublinear-time DP algorithms. Recall that our problem size is $m=n \cdot d$ in the following result. 

\begin{theorem}[Informal Theorem]\label{thm:main-inf}
\label{thm:informal}
There exists a problem $\cP$ of size $m$ such that
\begin{enumerate}
    \item $\cP$ can be solved privately with $O(m)$ query and time complexity.
    \item $\cP$ can be solved non-privately with $O(m^{2/3}\log(m)) \in o(m)$ query and time complexity.\label{item:thm-informal-nonprivate}
    \item Any algorithm that solves $\cP$ with $(1/3,1/3)$-accuracy and $(O(1),o(1/n))$-DP must have $\Omega(m / \log(m)) = \Tilde{\Omega}(m)$ query and time complexity. \label{item:thm-informal-sub-dp}
\end{enumerate}
\end{theorem}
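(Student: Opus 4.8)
The plan is to instantiate $\cP$ as the one-way marginal problem on a database $D\in\{0,1\}^{n\times d}$ with the two dimensions tuned to the sample-complexity threshold of~\cite{BunUV18}: take $n$ equal to $\sqrt d$ up to a polylogarithmic factor, chosen large enough that the Gaussian mechanism below is accurate, so that $m=n\cdot d=\tilde{\Theta}(d^{3/2})$, i.e.\ $d=\tilde{\Theta}(m^{2/3})$ and $n=\tilde{\Theta}(m^{1/3})$; on input $D$ the task is to output $a\in[0,1]^d$ with $\max_j\bigl|a_j-\tfrac1n\sum_i D_{ij}\bigr|\le 1/3$ with probability $\ge 2/3$. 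The two positive statements are the easy directions. For Item~1, read the whole database ($O(m)$ queries and time), compute the exact column means, and release them under the Gaussian mechanism with noise scaled to the $\ell_2$-sensitivity $\sqrt d/n=\tilde{O}(1)$; the standard analysis gives $(O(1),o(1/n))$-DP, and the polylogarithmic slack in $n$ drives the per-coordinate error below $1/3$. For Item~2, the non-private algorithm queries $O(\log d)$ uniformly random cells in each column and outputs the empirical means; Hoeffding plus a union bound over the $d$ columns gives $(1/3,1/3)$-accuracy, at cost $O(d\log d)=O(m^{2/3}\log m)=o(m)$.

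For Item~3 I would argue by contradiction via a re-identification (tracing) attack of the kind used to prove differential-privacy lower bounds through fingerprinting codes, adapted to the query model. Suppose $\cA$ is $(O(1),o(1/n))$-DP, $(1/3,1/3)$-accurate, and makes at most $q$ attribute queries. Draw the column biases $p=(p_1,\dots,p_d)$ from the fingerprinting distribution, let $D_{ij}\sim\mathrm{Bern}(p_j)$ independently, and run $a=\cA(D)$. Two ingredients drive the argument. First, \emph{accuracy implies a large total correlation}: by the coordinate-wise fingerprinting lemma, accuracy of $a$ on each column forces $\E\bigl[\sum_i\langle D_i-p,\,a-p\rangle\bigr]=\Omega(d)$, so an $\Omega(d)$ amount of ``signal'' about the rows is present in the output. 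Second --- the step that brings the query count into play --- a \emph{localization} property: since $\cA$ commits to each query before seeing the queried value (and the decision to query $(i,j)$ never depends on $D_{ij}$, even adaptively), for every cell $(i,j)$ that $\cA$ does not read the output $a$ is conditionally independent of $D_{ij}$; hence the contribution of row $i$ to the correlation, and the variance of row $i$'s score in the ``row $i$ resampled'' experiment, are controlled by the number $q_i$ of cells $\cA$ reads in row $i$, not by $d$. Feeding this localized variance bound into the soundness side of the tracing argument, using $(O(1),o(1/n))$-DP to transfer the accusation probability between the real and the resampled experiments, summing over the $n$ rows, and using $\delta=o(1/n)$ together with a soundness parameter $1/\mathrm{poly}$ for the code to absorb the $\log$-factor loss, yields $\sum_i q_i=q=\tilde{\Omega}(nd)=\tilde{\Omega}(m)$.

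The main obstacle is precisely this localization, which is exactly the mismatch between the query model and the classical sample-complexity setting flagged in the introduction: a fingerprinting-code tracer fundamentally relies on seeing (essentially) a colluder's entire record, whereas here the attacker --- $\cA$ composed with the tracer --- sees only an $o(1)$ fraction of each row, the read set is adaptive, and it may differ between the real and resampled runs. I expect the technical heart of the proof to be (i) making the fingerprinting correlation robust to adaptive, partial reading --- handled by conditioning on the transcript of query answers and bounding the resampled-experiment variance by $q_i$ rather than $d$, while separately controlling the two runs' read sets --- and (ii) arranging the problem and its parameters so that the localized tracing actually reaches the near-linear bound $\tilde{\Omega}(m)$ rather than the weaker $\tilde{\Omega}(m^{2/3})$ that a naive localized second-moment computation yields; I anticipate that this last gap is exactly why $\cP$ is taken ``based on'' rather than equal to one-way marginals --- e.g.\ a direct-sum / amplification over many fingerprinting sub-instances laid out so that the row count multiplies the per-instance query hardness, or a fingerprinting code whose coalition parameter is matched to the query budget, mirroring the role of ``querying at most $c$ rows'' in the informal discussion.
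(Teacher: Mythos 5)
Your two positive items are essentially the paper's (full read plus the Gaussian mechanism for Item~1; $O(d\log d)=O(m^{2/3}\log m)$ sampled cells plus Hoeffding for Item~2, with $n=\tilde\Theta(\sqrt d)$ so $m=\tilde\Theta(d^{3/2})$). The gap is in Item~3, and it is exactly the obstacle you name at the end without resolving: taking $\cP$ to be the \emph{plain} one-way marginal problem and hoping that a localized fingerprinting/tracing argument charges each row only for the $q_i$ cells read does not go through, and the paper does not attempt it. The tracing guarantee of a fingerprinting code holds only for coalitions of size at most $c$; an algorithm that spreads $o(d)$ attribute queries over all $n$ rows touches a coalition of size $n>c$ while making $o(m)$ queries, and the code then guarantees nothing. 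Worse, the paper's own footnote shows that distributing queries this way can drive the $\ell_2$-sensitivity of the read portion down to $o(1)$, so for the plain marginal problem in the attribute-query model it is not even clear that Item~3 is true. Your suggested fixes (direct sums, coalition parameter matched to the query budget) are named but not carried out, and your own estimate that the naive localized second moment only yields $\tilde\Omega(m^{2/3})$ concedes the point.

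The paper closes this gap with two constructions absent from your proposal. First, $\cP$ is not the plain marginal problem: each row is wrapped in a Shamir secret-sharing encoding of degree $2d'-1$ (\cref{def:ss-bit}), which is information-theoretically $(d',d',0)$-secure (\cref{thm:sec-A}), so an algorithm learns \emph{nothing} about a row until it has made $d'$ attribute queries to it. This converts an attribute-query budget $s$ into a row budget $c\le s/d'$ and restores the setting where a coalition-size-$c$ fingerprinting code applies (\cref{corol:atleast-A}). Second, since the algorithm's accuracy guarantee is only for the whole database while the tracer needs the marking condition on the $c$-row coalition, the codebook is padded with $\Theta(d)$ hidden all-0 and all-1 columns and the rows and columns are permuted (the PR-PPC code of \cref{sec:prppc}); \cref{lem:win-sample} then shows that any output feasible for $C'$ is, with probability $1-d/\ell$, also feasible for the subsample $C'_S$, because the algorithm cannot distinguish a genuine column that happens to be constant on the subsample from a padded one. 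With these in place, Tardos gives $c=\Omega(\sqrt{d/\log d})$ and hence $s\ge c\cdot d'=\Omega(d\sqrt{d/\log d})=\tilde\Omega(m)$. Without the encoding and the padding, your argument as written does not establish Item~3.
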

We note that \cite{BunUV18} implies that any accurate, DP algorithm for $\cP$ requires $n \in \Omega(\sqrt{d})$, and we can in fact invoke the theorem for the \emph{hardest} case and choose $m$ so that $n = \Theta(\sqrt{d})$.\footnote{We call $n = \Theta(\sqrt{d})$ the hardest case because, when $n$ becomes larger as a function of $d$, \cite{balle2018privacy} show that subsampling can improve the privacy/accuracy trade-off of existing DP algorithms.} For full details on the definition of the problem and the formal version of the theorem, see \cref{def:ss-our} and \cref{thm:main-ss}. Summarized in words, $\cP$ is solvable under  differential privacy, and there exists a non-private sublinear-time algorithm to solve $\cP$, but any DP algorithm must read (almost) the entire dataset, and thus have at least (nearly) linear running time. Our techniques build upon a rich literature of using fingerprinting codes in DP lower bounds. We note that the $\log(m)$ factor in our main result (Item~\ref{item:thm-informal-sub-dp}) of \cref{thm:main-inf} arises from the nearly-optimal Tardos fingerprinting code used in our lower bound construction. Thus it seems unlikely that this result can be improved unless one bypasses using fingerprinting codes entirely in the DP lower bound construction.

\subsection{Technical Overview}

{\bfseries Fingerprinting code (FPC) and DP. }We start the construction of our lower bound with the privacy lower bounds based on fingerprinting codes \cite{BunUV18}. For a set of $n$ users and a parameter $c \leq n$, an $(n,d,c)$-FPC consists of two algorithms $(Gen, Trace)$. The algorithm $Gen$ on input $n$ outputs a codebook $C\in \{0,1\}^{n\times d}$ where each row is a codeword of user $i \in [n]$ with code length $d = d(n, c)$. It guarantees that if at most $c$ users collude to combine their codewords into a new codeword $c'$ and the new code satisfies some (mild) \emph{marking condition} --- namely that if every colluder has the same bit $b$ in the $j$-th bit of their codeword, then the $j$-th bit of $c'$ must also be $b$ --- then the $Trace$ algorithm of the fingerprinting code can identify at least one colluder with high probability.\footnote{The idea of fingerprinting codes becomes colorful when imagining a publisher who distributes advance copies to press and wants to add watermarks that are robust, e.g., against pirated copies that result from averaging the copies of multiple colluders. To see that the marking assumption is a mild condition, consider that the codeword is hidden in the much larger content.} Bun et al.~\cite{BunUV18} launched a reidentification attack using fingerprinting codes to show that the output of any accurate algorithm for the one-way marginal problem must satisfy the marking condition for sufficiently large $d$, and therefore, at least one row (i.e., individual) from the database is identifiable --- making this algorithm not private. In more detail, given a fingerprinting code $(Gen,Trace)$, suppose a coalition of $n$ users builds dataset $D \in \{0,1\}^{n \times d}$ where each row corresponds to a codeword of length $d$ from the codebook $Gen$. For $j\in [d]$, if every user has bit $b$ in the $j$-th bit of their codeword then the one-way marginal answer for that column will be $b$. It is shown that any algorithm that has non-trivial accuracy for answering the one-way marginals on $D$ can be used to obtain a codeword that satisfies the marking condition. Therefore, using $Trace$ on such a codeword leads to identifying an individual in dataset $D$. Since an adversary is able to identify a user in $D$ based on the answer given by the algorithm, this clearly violates DP.  

{\bfseries Techniques of~\cite{BunUV18} do not directly apply. } 
In our model, an algorithm only sees a subset of the entries in the entire database via attribute queries. Suppose a coalition of $c \leq n$ users belongs to a dataset $D \in \{0,1\}^{n \times d}$ where each row corresponds to a codeword of length $d$. As a warm-up, let us first assume that an algorithm that solves the one-way marginal problem on input $D \in \{0,1\}^{n \times d}$ always queries for entire rows and that an adversary can simulate a query oracle to the algorithm's queries, i.e., respond with rows that exactly correspond to the set of $c$ colluders (for more details see~\cref{sec:ro}). To apply a fingerprinting code argument to such an algorithm, an adversary must identify a row from this subset of $c$ rows by examining the output of the algorithm. However, since the accuracy guarantee of the algorithm applies (only) to the \emph{whole} dataset, we cannot make the same argument as above to conclude that the marking condition holds for the \emph{subsample} of rows. 
In other words, we need to ensure that the output of an accurate algorithm that only sees a subsample of rows can also satisfy the marking condition. The techniques of~\cite{BunUV18} do not ensure such a property.  

{\bfseries Permute Rows and Pad and Permute Columns Fingerprinting codes (PR-PPC FPC). } In order to achieve the property described above, we need to ensure that any attempt of the algorithm to spoil the marking condition would contradict its accuracy guarantees. We achieve this property by padding $O(d)$ additional columns to the codebook $C$ to obtain $C' \in \{0,1\}^{n \times d'}$, where $d'=O(d)$, so that (codebook) columns whose output could be modified to violate the marking condition and (padded) columns whose modification would violate the accuracy guarantee are indistinguishable in the subsample with good probability. Padded columns have been used to define a variant of a fingerprinting code in previous work to achieve smooth DP lower bounds~\cite{PTU23}. In our work, we not only need a variant of FPC with padded columns, but we also need to permute the rows of the codebook (see \cref{sec:prppc} for the construction). This is because we need to define a sampling procedure with certain properties for the adversary to obtain a dataset on $c$ rows from a distribution over databases of $n$ rows, and one way to do so is permuting the rows of the codebook and outputting the first $c$ rows (e.g., see~\cref{thm:main-pro-reid} and \cref{thm:main-pro-reid-ss}).

\begin{remark}
    We note that~\cite{BunUV18} used a similar padding technique to argue about obtaining error-robust codes from ``weakly-robust'' codes (see Lemma 6.4 in \cite{BunUV18}). In particular, we could argue that the property that we need for our purpose is achieved by an error robust code. However, we choose to start with a weaker construction, as we do not inherently need the error robustness property. 
\end{remark}

{\bfseries Secret Sharing Encoding. }Finally, we overcome the assumption that the algorithm queries for entire rows by applying a secret sharing scheme to the padded codebook (see \cref{sec:ss}). In particular, an adversary can encode each row $x_i \in \{0,1\}^{d'}$ with respect to a random polynomial of degree $2d'-1$ as a share of size $2d$. The shares are defined by the $d$ codebook values and $d$ random values from the field. For each query of the algorithm, the adversary answers with a share from the second half. Information theory implies that the algorithm can only recover the $d$ codebook values after querying for all $d$ random value shares. Thus, we obtain a derivate of the one-way marginal problem that requires the algorithm to query an entire row to reveal the padded code book row. While there exist a DP algorithm (\cref{thm:dp-pss}) and a sublinear-time algorithm (\cref{thm:subtime-pss}) for this derived problem as well, we show that there exists no sublinear-time DP algorithm (up to a log factor) that can query for arbitrary entries in the database.

\section{Related Work}\label{sec:related}
Fingerprinting codes were first introduced in the context of DP lower bounds by~\cite{BunUV18}. Prior to their work, traitor-tracing schemes (which can be thought of as a cryptographic analogue of information-theoretic fingerprinting codes) were used by~\cite{DworkNRRV09,Ullman13} to obtain computational hardness results for DP. Subsequent works have refined and generalized the connection between DP and fingerprinting codes in many ways~\cite{SteinkeU16,SteinkeU17,DworkSSUV15,BunSU19,KMS22,KamathLSU19,CWZ215}. The fingerprinting code techniques of proving DP lower bounds have been used in many settings including principal component analysis~\cite{DworkTTZ14}, empirical risk minimization~\cite{BassilyST14}, mean estimation~\cite{BunUV18,KamathLSU19}, regression~\cite{CWZ215}, gaussian covariance estimation~\cite{KMS22}. Recently~\cite{PTU23} use fingerprinting codes to give smooth lower bounds for many problems including the 1-cluster problem and $k$-means clustering.

In the streaming model,~\cite{DinurSWZ23} give a separation between the space complexity of differentially private algorithms and non-private algorithms -- under cryptographic assumptions they show that there exists a problem that requires exponentially more space to be solved efficiently by a DP algorithm vs a non-private algorithm. By contrast, our focus is on lower bounding the running time (query complexity) of a differentially private algorithm. Our bounds do not require any cryptographic assumptions. ~\cite{JRSS21} give a lower bound in the continual release model, in particular they show that there exists a problem for which any DP continual release algorithm has error $\tilde{\Omega}(T^{1/3})$ times larger than the error of a DP algorithm in the static setting where $T$ is the length of the stream.

\section{Preliminaries}\label{sec:prelims}
We define a database $D \in \cX^n$ to be an ordered tuple of $n$ rows $(x_1,\ldots,x_n) \in \cX$, where $\cX$ is the data universe. For our purposes, we typically take $\cX = \{0,1\}^d$. Databases $D$ and $D'$ are neighboring if they differ by a single row and we denote this by $D \sim D'$. In more detail, we can replace the $i$-th row of a database $D$ with some fixed element of $\cX$ to obtain dataset $D_{-i} \sim D$. Importantly both $D$ and $D_{-i}$ are databases of the same size.   
\begin{definition}[Differential Privacy~\cite{DworkMNS06}]
Randomized algorithm $\cA: \cX^n \to \cR$ is $(\eps,\delta)$-differentially private if for every two neighboring databases $D \sim D'$ and every subset $S \subseteq \cR$, 
$$\Pr[\cA(D) \in S ] \leq e^\eps \Pr[\cA(D') \in S]+\delta$$
\end{definition}

\begin{definition}[Accuracy]
    Randomized algorithm $\cA: \cX^n \to \bbR^{d}$ is $(\alpha,p)$-accurate for problem $\cP$ if for every $D \in \cX^n$, with probability at least $1-p$, the output of $\cA$ is a vector $a \in \{0,1\}^d$ that satisfies $\vert a_\cP(D) - a \vert \leq \alpha$ where $a_\cP(D)$ denotes the exact solution of the problem $\cP$ on input $D$.
\end{definition}

The following definition of fingerprinting codes is ``fully-collusion-resilient''. For any coalition of users $S$ who collectively produce a string $y \in \{0,1\}^d$ as output, as long as $y$ satisfies the \textit{marking condition} -- for all positions $1 \leq j \leq d$, if the values $x_{ij}$ for all users $i$ in coalition $S$ agree with some letter $s \in \{0,1\}$, then $y_j =s$ -- then the combined codeword $y$ can be traced back to a user in the coalition.  Formally, for a codebook $C \in \{0,1\}^{n \times d}$, and a coalition $S \subseteq [n]$, we define the set of feasible codewords for $C_S$ to be
\[F(C_S) = \{c' \in \{0,1\}^d \mid \forall j \in [d],\exists i \in S, c'_j = c_{ij}\}\]
\begin{definition}[Fingerprinting Codes~\cite{BunUV18}]\label{def:fpc}
    For any $n,d \in \bbN, \xi \in (0,1]$, a pair of algorithms $(Gen,Trace)$ is an $(n,d,c)$-fingerprinting code with security $\xi$ against a coalition of size $c$ if $Gen$ outputs a codebook $C \in \{0,1\}^{n\times d}$ and secret state $st$ and for every (possibly randomized) adversary $\cA_{FP}$, and every coalition $S \subseteq [n]$ such that $\vert S \vert \leq c$, if we set $c' \leftarrow_R \cA_{FP}(C_S)$ then 
    \begin{enumerate}
        \item $\Pr[c' \in F(C_S) \wedge Trace(c') = \perp ]\leq \xi $
        \item $\Pr[Trace(c') \in [n]\setminus S]\leq \xi$
    \end{enumerate}
    where $C_S$ contains the rows of $C$ given by $S$, and the probability is taken over the coins of $C$, $Trace$, and $\cA_{FP}$. The algorithms $Gen$ and $Trace$ may share a common state denoted as $st$.
\end{definition}
\begin{remark}
    Although the adversary $\cA_{FP}$ is defined as taking the coalition of users' rows as input, we may abuse this notation and consider the entire codebook or a different input (related to the codebook) altogether. This does not change the security guarantees of the FPC against adversary $\cA_{FP}$ because the security guarantee holds as long as the output of $\cA_{FP}$ is a result of the users in the coalition $S$ actively colluding. 
\end{remark}
\begin{theorem}[Tardos Fingerprinting Code~\cite{Tardos08}]\label{thm:tardos}
    For every $n\in \bbN$ and $4\leq c\leq n$, there exists an $(n,d,c)$-fingerprinting code of length $d=O(c^2 \log(n/\xi))$ with security $\xi \in [0,1]$ against coalitions of size $c$.
\end{theorem}

\begin{theorem}[Gaussian Mechanism, \cite{dwork2014algorithmic}]
    Let $\eps, \delta \in (0,1)$ and $f : \mathbb{N}^d \rightarrow \mathbb{R}^d$. For $c > \sqrt{2 \ln(1.25 / \delta)} / \eps$, the Gaussian Mechanism with standard deviation parameter $\sigma \geq c \Delta_2 f$ is $(\eps, \delta)$-DP, where $\Delta_2$ is the $\ell_2$-norm sensitivity of $f$.
\end{theorem}

\begin{lemma}\label{lem:gauss}
 For $n \geq \sqrt{200 d \ln(20d) \ln (1.25 / \delta)} / \eps \in \Tilde{\Omega}(\sqrt{d})$, given a dataset $D \in \{0,1\}^{n \times d}$, there exists a $(1/10,1/10)$-accurate $(\eps,\delta)$-DP algorithm that solves the one-way marginals problem with $O(m)$ attribute queries, where $m=n \cdot d$.
\end{lemma}
\begin{proof}
We note that the $\ell_2$-sensitivity of the one-way marginals problem on a database $\{0,1\}^{n \times d}$ is $\sqrt{d}/n$. For $n \geq \sqrt{200 d \ln(20d) \ln (1.25 / \delta)} / \eps \in \Tilde{\Omega}(\sqrt{d})$, the Gaussian Mechanism is $(1/10, 1/10)$-accurate with
\begin{align*}
    &\sigma = \frac{\sqrt{2 \ln(1.25 / \delta)}}{\eps} \cdot \frac{\eps \sqrt{200d}}{\sqrt{2 \ln (1.25 / \delta) \cdot d \ln(10 d)}} = \frac{1}{\sqrt{200 \ln (10 d)}} ,\\&\text{as }
    \Pr_{X \sim \mathcal{N}(0, \sigma^2)} \left[ X \geq \frac{1}{10} \right] \leq 2e^{-\frac{1}{200\sigma^2}} \leq \frac{1}{20d}
\end{align*}
by the Cramer-Chernoff inequality and a union bound over all $d$ columns of the dataset.
\end{proof}

\section{Permute Rows and Pad and Permute Columns Fingerprinting Codes (PR-PPC FPC)}\label{sec:prppc}
In this section, we first introduce our pad and permute variant of the original fingerprinting codes where we Permute Rows and Pad and Permute Columns (PR-PPC $(n,d,c,\ell)$-FPC). Given $(Gen, Trace)$ of an $(n,d,c)$-FPC, we construct $Gen'$ and $Trace'$ in \cref{alg:gen'} and \cref{alg:trace'} to produce a PR-PPC $(n,d,c,\ell)$-FPC. In more detail, $Gen'$ samples codebook $C$ and secret state $st$ from $Gen$ where $C\in \{0,1\}^{n\times d}$. It then permutes the rows via a random permutation $\pi_R$, after which it pads $2\ell$ columns and performs another random permutation $\pi$ on the columns. Then, it releases the resulting codebook $C' \in \{0,1\}^{n\times d'}$, where $d'=d+2\ell$ and $\ell$ is the parameter which controls the number of columns padded to $C$. Note that the row permutation $\pi_R$ is public while the column permutation $\pi$ is part of the new secret state $st'$. The algorithm $Trace'$ receives an answer vector of dimension $d'$ and uses its secret state $st'=(st,\pi)$ to feed the vector entries that correspond to the original first $d$ columns via $\pi^{-1}$ to $Trace$ and releases the output of $Trace$. We obtain the following result directly from the definition of $Gen'$ and $Trace'$. 
\begin{corollary}\label{cor:rpppc}
    Given an $(n,d,c)$-FPC, $\ell \geq 0$, and the corresponding PR-PPC $(n,d,c,\ell)$-FPC, the properties of $Trace$ as stated by \cref{def:fpc} translate directly to $Trace'$.
\end{corollary}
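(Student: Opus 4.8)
The plan is to prove \cref{cor:rpppc} by a direct reduction to the security of the underlying $(n,d,c)$-FPC. Fix any (possibly randomized) adversary $\cA'$ and any coalition $S \subseteq [n]$ with $|S| \le c$ attacking the PR-PPC $(n,d,c,\ell)$-FPC, and let $(C',st') \leftarrow Gen'$ and $c' \leftarrow \cA'(C'_S)$ as in \cref{def:fpc}. I want to exhibit an adversary against $(Gen,Trace)$ whose ``bad event'' from \cref{def:fpc} contains the corresponding bad event for $\cA'$. The one wrinkle is the \emph{public} row permutation $\pi_R$ produced by $Gen'$: since \cref{def:fpc} fixes the coalition in advance while $\pi_R$ randomizes which rows of the base codebook $C$ the set $S$ of rows of $C'$ maps to, I will condition on the value $\pi_R=\rho$ and build a separate reduction for each $\rho$ (this also avoids assuming anything, such as exchangeability, about the base codebook distribution).

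For a fixed $\rho$, let $T=\rho(S)$, so $|T|=|S|\le c$. Define the base-FPC adversary $\cA_\rho$ that, given the rows $C_T$ of a base codebook $C\leftarrow Gen$, internally relabels these rows to the positions $S$ according to $\rho$, appends the $2\ell$ padded columns, and applies a fresh uniformly random column permutation $\pi$ exactly as $Gen'$ does, thereby reconstructing the precise view $C'_S$ expected by $\cA'$; it then runs $c'\leftarrow\cA'(C'_S)$ and outputs the length-$d$ vector $\tilde c$ given by the coordinates of $c'$ at positions $\pi(1),\dots,\pi(d)$ (the images of the original codebook columns). Note $\cA_\rho$ uses only public information and its own coins, never the secret state. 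Two deterministic facts then need to be checked. First, $c'\in F(C'_S)$ implies $\tilde c\in F(C_T)$: feasibility of $c'$ requires agreement with some coalition member in each of the $d'=d+2\ell$ columns, and restricting this to the $d$ original columns --- whose entries are untouched by padding and by $\pi$, only their positions move --- is precisely the statement $\tilde c\in F(C_T)$; discarding the padded columns only weakens the constraint. Second, by the definition of $Trace'$ in \cref{alg:trace'}, $Trace'(c')$ recovers the same $\tilde c$ via $\pi^{-1}$ and returns $Trace(\tilde c)$ (up to the public relabeling $\rho$), so $Trace'(c')=\perp \iff Trace(\tilde c)=\perp$ and $Trace'(c')\in[n]\setminus S \iff Trace(\tilde c)\in[n]\setminus T$. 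Conditioned on $\pi_R=\rho$, the randomness of the PR-PPC experiment (the coins of $Gen$, $\pi$, $\cA'$, $Trace$) coincides with that of the base-FPC experiment run with $(\cA_\rho,T)$, and $\tilde c$ and $Trace'(c')$ are the same functions of it. Applying \cref{def:fpc} to $(\cA_\rho,T)$ bounds each of $\Pr[\tilde c\in F(C_T)\wedge Trace(\tilde c)=\perp]$ and $\Pr[Trace(\tilde c)\in[n]\setminus T]$ by $\xi$; by the two facts above these dominate $\Pr[c'\in F(C'_S)\wedge Trace'(c')=\perp \mid \pi_R=\rho]$ and $\Pr[Trace'(c')\in[n]\setminus S \mid \pi_R=\rho]$ respectively, and averaging over the uniform choice of $\rho$ preserves the bound $\xi$. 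This yields both properties of \cref{def:fpc} for $(Gen',Trace')$ with the same security parameter, i.e.\ it is an $(n,d',c)$-FPC with security $\xi$.

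The work here is bookkeeping rather than mathematical depth, and the main thing to be careful about is tracking the three layers of relabeling simultaneously --- the row permutation $\pi_R$, the column permutation $\pi$, and the choice of coalition $S$ versus $T$ --- while verifying that the $2\ell$ padded columns can never help $\cA'$ evade tracing, which is immediate once one observes that $F(\cdot)$ only becomes easier to satisfy when columns are removed. The conditioning-on-$\pi_R$ step is the one slightly non-obvious move, needed so that the reduction respects the fixed-coalition form of \cref{def:fpc}.
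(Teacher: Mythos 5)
Your reduction is correct and matches the paper's (implicit) argument --- the paper simply asserts that the corollary follows directly from the definitions of $Gen'$ and $Trace'$, and what you write out is exactly the standard simulation one would supply: condition on the public row permutation, have the base-FPC adversary re-create the padded, column-permuted view, and observe that restricting a feasible codeword to the original $d$ columns preserves feasibility while $Trace'$ by construction just invokes $Trace$ on that restriction. The only quibble is a direction-of-permutation slip: since row $i$ of $C$ lands in row $\pi_R(i)$ of $C'$, the base coalition should be $T=\rho^{-1}(S)$ rather than $\rho(S)$, and the non-accusation event for $Trace'$ is relative to $T=\pi_R^{-1}(S)$ (as in the paper's \cref{thm:main-pro-reid}); this is pure bookkeeping and does not affect the validity of the argument.
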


\begin{algorithm}[!htb]
\caption{$Gen'$}
\label{alg:gen'}
\begin{algorithmic}[1]
\Require{Number of users $n \in \bbN$, number of padded 0/1 columns $\ell$}
\State Sample codebook $(C,st) \leftarrow Gen(n)$ such that $C \in \{0,1\}^{n \times d}$. 
\State Sample random permutation $\pi:[d'] \to [d']$ where $d':= d+2 \ell$. For an $n \times d'$ matrix $A$, define $n\times d'$ matrix $\pi(A)$ such that $\pi(j)$-th column of $\pi(A)$ equals to the $j$-th column of $A$ for every $j \in [d']$.
\State Sample random permutation $\pi_R : [n] \to [n]$. For an $n$-row matrix $A$, define $\pi_R(A)$ such that $\pi_R(i)$-th row of $\pi_R(A)$ equals to the $i$-th row of $A$ for every $i \in [n]$. 
\State $C^{\pi_R} \leftarrow $ Permute rows of $C$ via random permutation $\pi_R$.
\State $C_{pad}\leftarrow $ Pad $\ell$ columns of all 1's and $\ell$ columns of all 0's to matrix $C^{\pi_R}$.
\State $C' \leftarrow $Permute the columns of $C_{pad}$ according to random permutation $\pi$. 
\State Output $C'$ along with the new secret state $st':=(st, \pi)$ and permutation $\pi_R$.
\end{algorithmic}
\end{algorithm}

\begin{algorithm}[!htb]
\caption{$Trace'$}
\label{alg:trace'}
\begin{algorithmic}[1]
\Require{Answer vector $\mathbf{a} =(a^1, \ldots, a^{d'}) \in \{0,1\}^{d'}$, secret state $st'=(\pi,st)$ }
\State Output $Trace(\mathbf{a}_{og},st)$ where $\mathbf{a}_{og}=(a^{\pi(1)}, \ldots, a^{\pi(d)}) \in \{0,1\}^{d}$.
\end{algorithmic}
\end{algorithm}

We define the \emph{feasible sample property} of an FPC below. Informally, it states that if we have an algorithm that takes a sample (or subset) of rows from a codebook as input, and the algorithm's output is a feasible codeword for the entire codebook, then the same output should be a feasible codeword for the sample.

\begin{definition}[Feasible Sample Property]\label{def:feasible-sample}
    Let $C \in \{0,1\}^{n\times d}$ be a codebook of an $(n,d,c)$-FPC, $S \subseteq [n]$ be a coalition and $C_S \subseteq C$ be the matrix consisting of the corresponding rows indexed by $S$. Given an algorithm $\cA$ that takes as input $C_S$ and outputs a vector $\mathbf{o} \in \{0,1\}^d$, the feasible sample property states that if $\mathbf{o} \in F(C)$, then $\mathbf{o} \in F(C_S)$. 
\end{definition}

\begin{lemma}\label{lem:win-sample}
PR-PPC $(n,d,c,\ell)$-FPC satisfies the feasible sample property with probability at least $1-\frac{d}{\ell}$.
\end{lemma}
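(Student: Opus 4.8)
The plan is to show that the feasible sample property can only fail at a padded column, and then bound the probability that such a column survives into the sample with a value that differs from the rest of the coalition. First I would set up notation: let $C' \in \{0,1\}^{n \times d'}$ be the PR-PPC codebook with $d' = d + 2\ell$, let $S \subseteq [n]$ with $|S| \le c$ be the coalition (the sample of rows), and let $\mathbf{o} \in \{0,1\}^{d'}$ be the output of an algorithm $\cA$ run on $C'_S$. Assume $\mathbf{o} \in F(C')$, i.e.\ for every column $k \in [d']$ there is some $i \in S$ (indeed some $i \in [n]$, but feasibility for $C'$ only asks for $i \in [n]$) with $o_k = C'_{ik}$; wait — more precisely $F(C')$ requires for each $k$ the existence of \emph{some} row $i \in [n]$ agreeing, which since $C'$ has padded all-$0$ and all-$1$ columns is automatically satisfied there. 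The claim to prove is $\mathbf{o} \in F(C'_S)$, i.e.\ for every column $k$ there is $i \in S$ with $o_k = C'_{ik}$. The only way this can fail at a non-padded (``codebook'') column $k$ is if $\mathbf{o} \in F(C')$ forced agreement with a row outside $S$ but no row inside $S$ agrees; but since $C'_S$ is exactly the input, and on a codebook column the coalition $S$ has some spread of bits, feasibility-for-$C'$ at a column where all of $S$ agrees on a bit $b$ \emph{does} force $o_k = b$ precisely because the marking-style condition on the full codebook is consistent with that. The cleaner way: observe that the property can only fail at a column $k$ where all rows of $S$ share the same bit $b_k$, but $o_k = 1 - b_k$; and $\mathbf{o} \in F(C')$ then requires some row of $C'$ (outside $S$) to carry the bit $1-b_k$ in column $k$.

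The key step is to split on the type of column $k$ at which failure could occur. If $k$ is a padded column, then all $n$ rows of $C'$ agree there (all $0$s or all $1$s), so $\mathbf{o}\in F(C')$ forces $o_k$ to equal that common bit, which also agrees with every row of $S$ — so no failure at padded columns. Hence failure can only occur at a codebook column $k$ for which every row of $S$ has the same bit $b_k$ but some row outside $S$ has $1-b_k$, and on which $\cA$ happened to output $o_k = 1-b_k$ while still landing in $F(C')$. Now I invoke the column permutation $\pi$, which is secret: from $\cA$'s view, the $d$ codebook columns are placed uniformly at random among the $d'$ positions, interleaved with the $2\ell$ padded columns whose values in the sample $C'_S$ look identical to ``monochromatic'' codebook columns. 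So for any fixed output behavior of $\cA$, a column it ``flips'' to value $1-b_k$ is, conditioned on being monochromatic-in-$S$ with common value $b_k$, equally likely to be one of the (at least $\ell$, namely the $b_k$-padded ones) padded columns — flipping which keeps $\mathbf{o}$ feasible harmlessly — as to be one of the at most $d$ real codebook columns. Making this precise: condition on everything except $\pi$; the set of ``flipped'' positions is then determined, and the bad event is that at least one flipped monochromatic-in-$S$ position is a true codebook column rather than a padded one.

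The cleanest bound: there are $d$ true codebook columns; for the feasible-sample property to fail, at least one of them must, after permutation, land in a position that $\cA$ flips. But each codebook column, conditioned on the full symmetry of $\pi$ against the padded columns of the matching bit, is at least as likely to be harmless, giving a union bound of $d \cdot \frac{1}{\ell} = d/\ell$ over the $d$ codebook columns for the probability that any one of them is the culprit — hence the property holds with probability at least $1 - d/\ell$. I would carry this out by fixing the coins of $\cA$ and of $Gen$ (so $C$, $\pi_R$, and the monochromatic-in-$S$ structure are fixed), then arguing over the randomness of $\pi$ alone that the probability a particular codebook column is mapped to a position where it must be flipped is at most $1/\ell$ (since it competes with $\ge \ell$ padded columns of the appropriate bit that are interchangeable with it from $\cA$'s perspective), and finishing with a union bound over the $\le d$ codebook columns.

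The main obstacle I anticipate is making rigorous the claim that a true codebook column and a same-bit padded column are genuinely exchangeable from the algorithm's viewpoint — i.e.\ that conditioning on the algorithm's transcript does not break the symmetry of $\pi$ among those columns. This needs the observation that in the sample $C'_S$, a codebook column on which all of $S$ agrees on bit $b$ is literally indistinguishable from a padded $b$-column, so the algorithm's query answers (and hence its output) are invariant under swapping such columns; once that exchangeability is established, the counting bound $d/\ell$ is immediate.
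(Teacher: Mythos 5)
Your proposal is correct and follows essentially the same argument as the paper: failure of the feasible sample property can only occur at a column that is monochromatic in the sample $C'_S$ but not in $C'$, the secret column permutation together with the padded all-$b$ columns makes such a column exchangeable (from the algorithm's view) with the at least $\ell$ columns that are monochromatic-$b$ in all of $C'$, and counting at most $d$ dangerous versus at least $\ell$ safe columns gives the $d/\ell$ bound. The paper packages the final count as the single ratio $\lvert C'_{S\vert b}\setminus C'_{\vert b}\rvert/\lvert C'_{S\vert b}\rvert\le d/\ell$ rather than your per-column union bound, but the two accountings are equivalent at the level of rigor used in the paper.
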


\begin{proof}
Given $(Gen',Trace')$ of PR-PPC $(n,d,c,\ell)$-FPC which produces codebook $C' \in \{0,1\}^{n \times d'}$ and sampling algorithm $\cA$ which takes as input $C'_S \subseteq C'$ and outputs a vector $\mathbf{o} \in \{0,1\}^{d'}$ where $d'=d+2\ell$, we define the event $\textsf{BAD}_S$ as ${\textbf{o}} \in F(C')$ but ${\textbf{o}} \not\in F(C'_S)$.

We denote the indices of columns of $C'$ in which all the bits are 1 as $C'_{\vert1}$ and the indices of columns in which all the bits are 0 as $C'_{\vert0}$. Similarly, we define $C'_{S \vert 1}$ and $C'_{S \vert 0}$ for the columns that are all 1 and all 0 in $C'_S$, respectively. Note that $C'_{\vert 1} \subseteq C'_{S \vert 1}$ and $C'_{\vert 0} \subseteq C'_{S \vert 0}$. Since by definition, algorithm $\cA$ only has access to the set of rows in $C'_S$, in order for the output ${\textbf{o}}$ to satisfy ${\textbf{o}} \in F(C')$ but ${\textbf{o}} \not\in F(C'_S)$, an adversary that aims for $\textsf{BAD}_S$ must flip a bit of the resulting codeword that originates from $\{C'_{S \vert1} \cup C'_{S \vert0} \} \setminus \{C'_{\vert1} \cup C'_{\vert0} \}$. In other words, the event $\textsf{BAD}_S$ occurs only if the adversary identifies a column from $C'$ that contains at least one 0 and one 1, but reduces to an all-1 or all-0 column in $C'_S$. 

More formally, the adversary can pick the bit $b \in \{0,1\}$ resulting in a column from $C'_{S \vert b}$ to flip. The probability that the adversary correctly identifies a column from $C'_{S \vert b} \setminus C'_{\vert b}$ is at most $\frac{\lvert C'_{S \vert b} \setminus C'_{\vert b} \rvert}{\lvert C'_{S \vert b} \rvert}$. Observe that $\lvert C'_{S \vert b} \rvert \geq \lvert C'_{\vert b} \rvert \geq \ell$ due to the $\ell$ padded all-$b$ columns, and therefore $\lvert C'_{S \vert b} \setminus C'_{\vert b} \rvert \leq \lvert C'_{S \vert b} \rvert - \lvert C'_{\vert b} \rvert \leq (d + \ell) - \ell = d$. Thus, the probability that event $\textsf{BAD}_S$ occurs is at most $\max_{b \in \{0,1\}} \frac{\lvert C'_{S \vert b} \setminus C'_{\vert b} \rvert}{\lvert C'_{S \vert b} \rvert} \leq \frac{d}{\ell}$. 
\end{proof}

\section{Lower Bound}\label{sec:lowerbound}
We present our lower bound for sublinear-time DP algorithms in this section. The main idea behind our lower bound proof is to construct a reidentification attack in which the adversary $\mathcal{B}$ is given oracle access to the algorithm $\mathcal{A}$ that accurately solves our proposed problem $\mathcal{P}$. Using fingerprinting codes, we will show that the adversary can use the output of $\mathcal{A}$ to reidentify a subset of the input set given to $\mathcal{A}$ with high probability. We invoke existing fingerprinting code bounds to achieve our final lower bound result. 
In \cref{sec:ro} we discuss a  warm-up problem, where the algorithm can only make row queries to release the one-way marginals of the dataset. We present our main result and the lower bound construction for algorithms that can make arbitrary attribute queries in \cref{sec:ss}. In the sequel, our problem space has size $m=n\cdot d = \Omega(d \sqrt{d})$ and our results will be in terms of dimension $d$.

\subsection{Warm Up: Using a Random Oracle}\label{sec:ro}
In this section, we first present a warm-up problem which we call the Random Oracle Problem ($\cP_{RO}$). This is an extension of the one-way marginals problem in the following manner --- for an input dataset $D = (x_1, \ldots, x_n)$, and access to a random oracle $H$, the $\cP_{RO}$ problem takes as input an encoded dataset $D_H=(z_1, \ldots, z_n)$ in which $z_i =H(i) \oplus x_i$, and outputs the one-way marginals of the underlying dataset $D$ (see \cref{def:ro} for a formal definition). The main intuition for introducing such a problem is that we want to force an algorithm that solves this problem to query an entire row. Recall that in our (final) model an algorithm is allowed to make arbitrary attribute queries. We shall see how even in this simpler formulation (where the algorithm is forced to query entire rows instead of attributes), we need to use the variant of fingerprinting codes PR-PPC introduced in \cref{sec:prppc}, to prove a DP lower bound. The intuition behind an algorithm solving $\cP_{RO}$ having to query entire rows is the following --- Given $D_H$, in order to approximate or exactly compute the one-way marginals of $D$, an algorithm needs to query $H(i)$ for $i \in [n]$, as otherwise, by the properties of the random oracle and one-time pad (OTP), the value of $x_i$ is information-theoretically hidden.

\begin{definition}[Random Oracle Problem $\cP_{RO}$]\label{def:ro}
Given a random oracle $H: [n] \to \{0,1\}^d$, and a dataset $D=(x_1, \ldots, x_n)$ where $x_i \in \{0,1\}^d$,
define dataset $D_{H}:=(z_1, \ldots, z_n)$ where $z_i=H(i)\oplus x_i$. For simplicity of notation, we refer to the operation for obtaining $D_H$ from $D$ as $H(D)$. The problem $\cP_{RO}$ on input $D_H$ releases the one-way marginals of $D$.

We use $\cP_{RO}(D)$ to denote that $\cP_{RO}$ releases the one-way marginals of the underlying dataset $D$. 
\end{definition}

\subparagraph*{Query Model.} On input $D_H \in (\{0,1\}^d)^n$, an algorithm can query the random oracle $H$ through \emph{row queries}, i.e., given a row index $i\in [n]$ of $D_H$, the answer given is $H(i) \in \{0,1\}^d$. We note that our final result in this subsection will still be presented in the form of attribute queries as 1 row query translates to $d$ attribute queries.

Observe that there exists an $(\eps,\delta)$-DP algorithm for $\cP_{RO}(D)$ that on input $D_H$, queries the entire dataset via row queries to the random oracle $H$, i.e., it makes $dn = O(d \sqrt{d})$ queries. After obtaining the rows to the underlying dataset $D$ it releases the one-way marginals using the Gaussian Mechanism (see \cref{lem:gauss}). We also note that there exists a sublinear non-DP algorithm for $\cP_{RO}(D)$ which makes $O(d \log d)$ queries, which is a simple corollary of Hoeffding bounds. Our goal in this section is to prove the lower bound below. Recall that $n \in \Omega(\sqrt{d})$, so the problem size is $\Omega(d \sqrt{d})$.

\begin{theorem} [Lower Bound for $\cP_{RO}$]\label{thm:main-ro-inf}
Any algorithm that solves the problem $\cP_{RO}$ with $s$ attribute query complexity, $(1/3,1/3)$-accuracy and $(O(1),o(1/s))$-DP must have $s=\Omega(d\sqrt{d/\log(d)})$.
\end{theorem}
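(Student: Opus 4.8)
I would argue by contradiction through a reidentification attack built on the PR‑PPC fingerprinting codes of \cref{sec:prppc}. Fix a coalition size $c$ (tuned below) and suppose $\cA$ solves $\cP_{RO}$ with $s \le cd$ attribute queries, $(1/3,1/3)$‑accuracy, and $(\eps,\delta)$‑DP for $\eps=O(1)$ and $\delta=o(1/s)$. In the query model for $\cP_{RO}$ (\cref{def:ro}) each access to $H$ returns a whole row and costs $d$ attribute queries, and by the one‑time‑pad property of $D_H$ the string $x_i$ is independent of $\cA$'s view unless $\cA$ queries $H(i)$; hence $\cA$ ``decodes'' a set $S$ of at most $\lfloor s/d\rfloor\le c$ rows of the underlying dataset, and conditioned on $S$ its output is a randomized function of $\{x_i : i\in S\}$ alone. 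The plan is to build an adversary $\cB$ that controls $H$ and the input, embeds a PR‑PPC codebook, and uses $Trace'$ on (the rounding of) $\cA$'s answer to reidentify a member of $S$ — which is impossible if $\cA$ is private.

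\textbf{The codebook and the attack.} Instantiate a Tardos $(n,d_{\mathrm{code}},c)$‑FPC (\cref{thm:tardos}) with security $\xi=n^{-2}$, so $d_{\mathrm{code}}=O(c^2\log n)$, and build the PR‑PPC $(n,d_{\mathrm{code}},c,\ell)$‑FPC with $\ell=\Theta(d_{\mathrm{code}})$ chosen so that $d_{\mathrm{code}}/\ell\le 1/100$ and $d=d_{\mathrm{code}}+2\ell$; taking $n=\Theta(\sqrt d)$ (the regime of \cref{lem:gauss}, so that a linear‑time DP algorithm exists) then forces $c=\Theta(\sqrt{d/\log d})$. The adversary $\cB$ runs $Gen'$ to obtain the codebook $C'\in\{0,1\}^{n\times d}$, the public row permutation, and the secret state $st'$; it takes the underlying dataset to be $C'$, draws uniform strings $z_1,\dots,z_n$, hands $\cA$ the input $(z_1,\dots,z_n)$, answers a row query $i$ with $z_i\oplus C'_i$, and finally rounds $\cA$'s output coordinatewise to $\hat a\in\{0,1\}^d$ and outputs $Trace'(\hat a,st')$. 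The success analysis is: (i) every all‑$0$ or all‑$1$ column of $C'$ — the $2\ell$ padded columns, plus any codebook column that happens to be constant — has an integer marginal on $C'$, so the $(1/3,1/3)$‑accuracy guarantee forces $\hat a$ to match those bits, whence $\hat a\in F(C')$ with probability $\ge 2/3$; (ii) by the feasible sample property (\cref{lem:win-sample}), applied with ``sampling algorithm'' $=\cB$'s simulation of $\cA$ (whose output depends only on $C'_S$ by the one‑time‑pad argument), this upgrades to $\hat a\in F(C'_S)$ except with probability $\le d_{\mathrm{code}}/\ell\le 1/100$; (iii) by \cref{cor:rpppc} (the guarantees of \cref{def:fpc} inherited by $Trace'$), conditioned on $\hat a\in F(C'_S)$ the output $Trace'(\hat a,st')$ lies in $S$ except with probability $\le 2\xi$. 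A union bound gives $\Pr[Trace'(\hat a,st')\in S]\ge 2/3-1/100-2\xi\ge 1/2$.

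\textbf{Reidentification contradicts privacy.} Pick a uniformly random target $i^*\in[n]$ and let $D^{-i^*}$ be $C'$ with row $i^*$ replaced by a fixed string; it is a valid neighbor of $D=C'$. On input $D$, since $\cB$'s run is independent of $i^*$ and $Trace'(\hat a,st')\in S$ with probability $\ge 1/2$, averaging over $i^*$ gives $\Pr[Trace'(\hat a,st')=i^*]\ge 1/(2n)$. On input $D^{-i^*}$, $\cA$ never sees $C'_{i^*}$, so $\hat a$ is independent of $C'_{i^*}$, and property~2 of \cref{def:fpc} (through \cref{cor:rpppc}) gives $\Pr[Trace'(\hat a,st')=i^*]\le\xi$. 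Applying the $(\eps,\delta)$‑DP of $\cA$ to the pair $(D,D^{-i^*})$ (DP is closed under $\cB$'s post‑processing, whose extra state is fixed once $C'$ is) and then averaging over $i^*$ and over $C'$ yields $\tfrac1{2n}\le e^{\eps}\xi+\delta$. This fails for large $d$: $e^{\eps}=O(1)$, $\xi=n^{-2}=o(1/n)$, and $\delta=o(1/s)=o(1/d)=o(1/n)$ since an accurate algorithm must decode at least one row on some input, so $s\ge d=\omega(n)$. Hence $s=\Omega(cd)=\Omega\!\big(d\sqrt{d/\log d}\big)$.

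\textbf{Main obstacle.} The crux is steps (i)--(ii): an algorithm that decodes only a subsample $S$ of the rows is promised accuracy only against the marginals of the \emph{whole} dataset, so a priori its answer need not be a feasible codeword for $C'_S$ — exactly what $Trace$ requires. The PR‑PPC padding is what repairs this: after the secret column permutation, the $\Omega(\ell)$ genuinely constant padded columns (on which accuracy pins the answer down) are statistically indistinguishable, within any subsample, from the $\le d_{\mathrm{code}}$ codebook columns that an adversarial algorithm would have to corrupt to spoil the marking condition, so any such corruption violates accuracy except with probability $\le d_{\mathrm{code}}/\ell$. The remaining points are routine but must be checked carefully: that the random‑oracle/one‑time‑pad reduction really costs a full $d$ attribute queries per decoded row, and that the coalition $S$ — chosen by $\cA$ itself, possibly adaptively, rather than by $\cB$ — is still a legitimate coalition of size $\le c$ to which \cref{def:fpc} and \cref{lem:win-sample} apply.
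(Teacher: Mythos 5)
Your proposal is correct and follows essentially the same route as the paper: a reidentification adversary that simulates the random oracle via a one-time pad to embed a PR-PPC codebook, the accuracy-implies-feasibility step combined with \cref{lem:win-sample} to get feasibility for the subsample, the $Trace'$ guarantees via \cref{cor:rpppc}, and the Tardos bound to convert the coalition size into the query lower bound. The only differences are cosmetic (you average the reidentification probability over all $n$ users rather than over the $c$ queried rows, and you let the coalition be the adaptively queried rows rather than mapping the $j$-th distinct query to the $j$-th row of the permuted codebook), and both variants are sound.
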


We present a high level overview of the proof of \cref{thm:main-ro-inf} here. We first show that given an $(n,d,c)$-FPC, there exists a distribution on $c$ rows from which an adversary $\cB$ can sample and create an $n$-row input instance for an algorithm $\cA$ that accurately solves $\cP_{RO}$ (see \cref{thm:main-pro-reid}). Next we argue that the rounded output of $\cA$, denoted as $\mathbf{a}$, is a feasible codeword for the sample of $c$ rows as long as $\cA$ is accurate in a non-trivial manner and $\mathbf{a}$ is feasible for the entire dataset (see \cref{clm:success-A}). The adversary $\cB$ can then use the output from $\cA$ to (potentially) reidentify an individual from the coalition of size $c$. Next we relate these claims back to DP through \cref{lem:acc-dp}, which states that if there exists a distribution $\cC$ on $c \leq n$ row databases according to \cref{thm:main-pro-reid}, then there is no $(\eps,\delta)$-DP algorithm $\cA$ that is $(1/3,1/3)$-accurate for $\cP_{RO}$ with $\eps=O(1)$ and $\delta=o(1/c)$. Finally, invoking the Tardos construction for fingerprinting codes in \cref{thm:tardos} gives us our lower bound.

\begin{theorem}\label{thm:main-pro-reid}%
 For every $n,d \in \bbN$, $\xi \in [0,1]$ and $c \leq n$, if there exists an $(n,d,c)$-fingerprinting code with security $\xi$, then there exists a distribution on $c$-row databases $\cC_\cS$, a row permutation $\pi_R:[n] \to [n]$, and an adversary $\cB$ for every randomized algorithm $\cA$ with row query complexity $c$ and $(1/3,1/3)$-accuracy for $\cP_{RO}$ such that 
   \begin{enumerate}
       \item \label{it:reid-sec}$\Pr_{C'_{S }\leftarrow \cC_\cS} [\cB^{\cA}(C'_S ) = \perp] \leq \xi$       
       \item \label{it:reid-sound} For every $i \in [c]$, $\Pr_{C'_S\leftarrow \cC_\cS}[\cB^{\cA}(C'_{S_{-i}}) = \pi_R^{-1}(i)] \leq \xi$.
   \end{enumerate}
   The probabilities are taken over the random coins of $\cB$ and the choice of $C'_S$.   
   
\end{theorem}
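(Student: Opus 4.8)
The plan is to build the adversary $\cB$ and the distribution $\cC_\cS$ directly from the PR-PPC variant of the given $(n,d,c)$-FPC, and then to show that the two desired inequalities follow from (i) the feasible sample property of \cref{lem:win-sample}, (ii) the accuracy of $\cA$, and (iii) the tracing guarantees of $Trace'$ from \cref{cor:rpppc}. Concretely, I would first run $Gen'$ with padding parameter $\ell$ (to be fixed later, e.g. $\ell = \Theta(d)$ so that $d/\ell$ is a small constant and $d' = d + 2\ell = O(d)$), obtaining a codebook $C' \in \{0,1\}^{n \times d'}$, a public row permutation $\pi_R$, and a secret state $st' = (st,\pi)$. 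The distribution $\cC_\cS$ is then: take the first $c$ rows of $C'$ (these are the rows $\pi_R(1),\dots,\pi_R(c)$ of the original codebook, so the coalition is $S = \pi_R^{-1}(\{1,\dots,c\})$), padded/permuted as in $Gen'$. On input $C'_S$, the adversary $\cB$ simulates a length-$n$ instance of $\cP_{RO}$ for $\cA$: it picks a random oracle $H$ and for each of $\cA$'s (at most $c$) row queries to index $i$, it returns $H(i)$; it sets the underlying dataset to have $x_i$ equal to the relevant row of $C'_S$ when $i \in \{1,\dots,c\}$ and anything fixed (say $0^{d'}$, which lies in all feasible sets trivially via the padded-$0$ columns not mattering — or more carefully, it just fills with rows that never break the marking condition) otherwise; equivalently, since $\cA$ makes only $c$ row queries, $\cB$ answers consistently with some dataset whose queried rows are exactly the $c$ codewords. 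When $\cA$ halts with output vector $\mathbf{v} \in \bbR^{d'}$, $\cB$ rounds it coordinatewise to $\mathbf{a} \in \{0,1\}^{d'}$ and outputs $Trace'(\mathbf{a}, st')$.

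The key step is a claim (this is where \cref{clm:success-A} is invoked) that, conditioned on the event from \cref{lem:win-sample}, if $\cA$ is $(1/3,1/3)$-accurate then with probability $\geq 2/3$ over $\cA$'s coins the rounded output $\mathbf{a}$ lies in $F(C'_S)$. The argument: accuracy means $\mathbf{v}$ is within additive $1/3$ of the true one-way marginals of the underlying $n$-row dataset in each coordinate. For any column $j$ that is all-$b$ in the \emph{full} constructed dataset, the true marginal is $b$, so $v_j$ is within $1/3$ of $b$, hence $a_j = b$; this shows $\mathbf{a} \in F(C')$ (here $C'$ plays the role of the full codebook). Then the feasible sample property (\cref{lem:win-sample}, which holds with probability $\geq 1 - d/\ell$) upgrades $\mathbf{a} \in F(C')$ to $\mathbf{a} \in F(C'_S)$, i.e. $\mathbf{a} \in F((C')_S)$ where we now view $\mathbf{a}$ restricted appropriately — but since $\mathbf{a}$ is feasible for the sample of the padded codebook, applying $\pi^{-1}$ recovers a feasible codeword for the original coalition $C_S$ in the sense of \cref{def:fpc}. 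Now \cref{cor:rpppc} says $Trace'$ inherits the FPC guarantees: Property 1 of \cref{def:fpc} gives $\Pr[\mathbf{a} \in F \wedge Trace' = \perp] \leq \xi$, and combined with the above (feasibility holds except with probability $d/\ell + 1/3$), a union bound yields Item~\ref{it:reid-sec} — modulo the fact that $\xi$ as stated must absorb these extra constant error terms, so really one proves the bound with $\xi$ replaced by $\xi + d/\ell + 1/3$ and then chooses FPC parameters ($\xi$ small, $\ell$ large) so the total is the desired small constant; I expect the formal statement in the paper handles this by a suitable reparametrization. Item~\ref{it:reid-sound} is the soundness/no-framing direction: when $\cB$ is run on $C'_{S_{-i}}$ (the coalition with user $i$ replaced), user $\pi_R^{-1}(i)$ is not in the coalition, so Property 2 of \cref{def:fpc} via \cref{cor:rpppc} gives $\Pr[Trace' \in [n] \setminus S] \leq \xi$, and in particular the probability of outputting the specific non-colluder $\pi_R^{-1}(i)$ is at most $\xi$.

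The main obstacle is the first step — showing the simulation is faithful and that $\mathbf{a} \in F(C')$. Two subtleties need care: first, the random oracle / one-time-pad argument must guarantee that an algorithm making only $c \le n$ row queries genuinely cannot distinguish $\cB$'s simulated instance from a "real" $\cP_{RO}$ instance on a dataset of the right form, so that $\cA$'s accuracy guarantee actually applies to the simulated run (the unqueried rows are information-theoretically hidden, so $\cA$'s output distribution is the same as on a genuine instance whose underlying dataset extends the $c$ codewords arbitrarily); second, one must be careful about \emph{which} dataset the full marginals are taken over — $\cB$ should pad the remaining $n-c$ rows so that no new all-$0$ or all-$1$ column is created beyond those of $C'$ itself, which is automatic since the padded columns of $C'$ are already all-$0$ or all-$1$ and the genuine codebook columns of $C'$ each contain both bits (after the row permutation this is still about the \emph{full} $n$-row codebook, not the sample). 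Once these bookkeeping points are pinned down, the three ingredients (\cref{lem:win-sample}, accuracy, \cref{cor:rpppc}) combine by a union bound as sketched. I would present the calculation of how $\xi$, $\ell$, and the accuracy parameter $1/3$ combine into the final success probability as a short lemma rather than inline, to keep the constants transparent.
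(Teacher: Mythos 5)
Your proposal is correct and follows essentially the same route as the paper's proof: the same PR-PPC construction with $\ell=\Theta(d)$, the same distribution (the first $c$ rows of the row-permuted, padded codebook), the same adversary that programs the random oracle so that the $c$ queried rows decode to the coalition's codewords and then runs $Trace'$ on the rounded output, and the same chain accuracy $\Rightarrow \bar{\mathbf{a}}\in F(C')$ $\Rightarrow$ (via \cref{lem:win-sample}) $\bar{\mathbf{a}}\in F(C'_S)$ $\Rightarrow$ (via \cref{cor:rpppc}) tracing, with your "faithful simulation" worry resolved exactly as in the paper's \cref{clm:indis-pro} by the one-time-pad argument. Your observation that Item~\ref{it:reid-sec} really holds only up to the additional $1/3+d/\ell$ feasibility-failure probability is well taken: the paper does not reparametrize but instead carries those terms separately (via \cref{clm:success-A}) into the downstream union bound in \cref{lem:acc-dp}.
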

 Let $(Gen, Trace)$ be the promised $(n,d,c)$-fingerprinting code in the theorem statement. We first construct a PR-PPC $(n,d,c,\ell)$-FPC with $\ell:=100d$ (see \cref{sec:prppc} for details). 
  
 The distribution $\cC_\cS$ on $c$-row databases is implicitly defined through the sampling process below
 \begin{enumerate}
 \item Let $C' \leftarrow Gen'(n,100d)$ (see \cref{alg:gen'}) where $C' \in \{0,1\}^{n \times d'}$ and $d'=d+100d=101d$. Note that $Gen'$ also outputs $\pi_R$ which is a public permutation on rows. 
     \item Let $C'_S =(x_1, \ldots, x_c)\in \{0,1\}^{c \times d'}$ be the first $c$ rows of $C' \in \{0,1\}^{n\times d'}$
     \item Output $C'_S$
 \end{enumerate}

Next we define the privacy adversary $\cB$.

    \noindent{\bfseries {Adversary $\cB$ Algorithm. }}
    Adversary $\cB$ receives $C'_S$ as input and does the following: 
    \begin{enumerate}
        \item Create a database $D =(r_1,\ldots, r_n) \in \{0,1\}^{n \times d'}$ where each row $r_i \in \{0,1\}^{d'}$ consists of 0/1 entries sampled independently and uniformly at random.
        \item Given oracle access to randomized algorithm $\cA$ which solves $\cP_{RO}$ on input $D$, $\cB$ simulates the answer to the distinct $i_j$-th row query (where $j\in [c]$) made by $\cA$ to random oracle $H$ as follows: 
        \begin{enumerate}
            \item  Return $H(i_j):= r_{i_j} \oplus x_j$. 
        \end{enumerate}
        \item Let $\mathbf{a}$ be the output of $\cA(D)$ where $\mathbf{a} \in [0,1]^{d'}$. Round each entry of $\mathbf{a}$ to $\{0,1\}$, call this new vector $\mathbf{\bar{a}} \in \{0,1\}^{d'}$. 
         \item Output $Trace'(\mathbf{\bar{a}})$
    \end{enumerate}

    \subparagraph*{Analysis.} We focus on proving that Property \ref{it:reid-sec} and Property \ref{it:reid-sound} of the theorem statement are indeed satisfied by adversary $\cB$.
    
    Recall the notation in \cref{def:ro} where $\cP_{RO}(C')$ means that $\cP_{RO}$ releases the one-way marginals of the underlying dataset $C'$. We first show that $\cA$ solving $\cP_{RO}(H(D))$ is perfectly indistinguishable from $\cA$ solving $\cP_{RO}(C')$ in \cref{clm:indis-pro}. This is necessary as $Trace'$ can only identify an individual in the coalition of size $c$ with respect to the codebook $C'$ produced by $Gen'$.
    \begin{lemma}\label{clm:indis-pro}
    $\cA$ solving $\cP_{RO}(H(D))$ is perfectly indistinguishable from $\cA$ solving $\cP_{RO}(C')$.   
    \end{lemma}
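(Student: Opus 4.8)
The plan is to prove this by exhibiting a coupling under which the entire interaction transcript seen by $\cA$ --- its input, the answers to its (at most $c$) distinct row queries, and its internal coins --- has the same distribution in the two experiments, so that in particular the output of $\cA$ has the same distribution; that is exactly what ``perfectly indistinguishable'' asserts. First I would write out the two experiments explicitly. When $\cA$ solves $\cP_{RO}(C')$: a uniformly random oracle $H\colon[n]\to\{0,1\}^{d'}$ is drawn, $\cA$ is given the encoded database $C'_H$ whose $i$-th row is $H(i)\oplus C'_i$, and a query to row $i$ returns $H(i)$, so $\cA$ recovers the codeword $C'_i$. When $\cA$ solves $\cP_{RO}(H(D))$ inside $\cB$: $\cA$ is given the uniformly random database $D=(r_1,\dots,r_n)$, and the $j$-th distinct row query $i_j$ is answered by $r_{i_j}\oplus x_j$, so $\cA$ recovers $x_j=C'_j$ (XOR-ing the answer with the $i_j$-th row $r_{i_j}$ of its input).

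The first step is to match the input distributions. In the $\cP_{RO}(C')$ experiment, since the $H(i)$ are i.i.d.\ uniform and independent of $C'$, the database $C'_H$ is uniform on $(\{0,1\}^{d'})^n$ and independent of $C'$; in $\cB$'s simulation, $D$ is uniform on $(\{0,1\}^{d'})^n$ and independent of $C'$ by construction. Hence I may couple the two experiments so that $\cA$'s input is the same string $w$ and its coins are the same $\rho$; conditioned on this, in both experiments $C'$ still has its $Gen'$-induced distribution and is independent of $(w,\rho)$, and it remains to compare the conditional laws of the query answers. In each experiment the $j$-th distinct oracle answer, XOR-ed against the corresponding row of $w$, reveals a single row of $C'$ --- namely $C'_{i_j}$ in the $\cP_{RO}(C')$ experiment, versus $C'_j$ in $\cB$'s simulation --- and the simulated answers $r_{i_j}\oplus x_j$ are themselves mutually independent and uniform, exactly like a genuine random oracle. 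So the \emph{only} possible discrepancy is the reindexing: on its $j$-th distinct query, $\cA$ learns $C'_{i_j}$ in one experiment and $C'_{j}$ in the other.

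The crux is that this reindexing is invisible, and here the random row permutation $\pi_R$ applied inside $Gen'$ is essential: it makes the rows of $C'$ exchangeable, i.e., $(C'_{\sigma(1)},\dots,C'_{\sigma(n)})\stackrel{d}{=}(C'_1,\dots,C'_n)$ for every permutation $\sigma$ of $[n]$. I would make this precise by induction on the number of distinct queries processed, carrying the invariant that, conditioned on $\cA$'s current partial view, the not-yet-revealed rows of $C'$ are exchangeable and have the same conditional law in both experiments. In the inductive step $\cA$'s next query index is a fixed function of its partial view; revealing the row at that index ($\cP_{RO}(C')$) and revealing the next unrevealed row in index order ($\cB$'s simulation) yield the same conditional distribution for the revealed value, by exchangeability of the remaining rows, and conditioning further on that value leaves the rest exchangeable. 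Combined with the matched inputs and coins, this shows $\cA$'s full view --- hence its (randomized) output --- is identically distributed in the two experiments. I expect the adaptivity of $\cA$'s queries (carrying the exchangeability argument through when the $i_j$ depend on rows revealed earlier) to be the one delicate point; everything else is routine one-time-pad / random-oracle uniformity.
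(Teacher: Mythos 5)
Your proof is correct and follows essentially the same route as the paper's: both arguments reduce the claim to (i) the fact that $\cA$'s input is uniformly distributed and independent of $C'$ in either experiment, and (ii) the observation that the only remaining difference is the reindexing of which codebook row each distinct oracle query reveals, which is neutralized by the random row permutation $\pi_R$. The paper disposes of (ii) in a single sentence, whereas you make the exchangeability-plus-induction argument for adaptive queries explicit; that is a fuller write-up of the same idea, not a different approach.
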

    \begin{proof}
    We define the following experiments. 

    \noindent{\bfseries {Real World.}}
    \begin{enumerate}
        \item Given $C' = (x_1, \ldots, x_n)$ where $(C',st') \leftarrow Gen'(\ell)$ with $\ell= 100d$, let $C'_S= (x_1, \ldots,x_c)$.
        \item Create a database $D =(r_1,\ldots, r_n)$ where $ r_i \in \{0,1\}^{d'}$ are random entries. 
        \item Let $\mathbf{a}$ be the output of $\cA(D)$ where $\mathbf{a} \in [0,1]^{d'}$. Simulate $H$ as follows: 
        \begin{enumerate}
            \item Let $i_1, \ldots, i_c$ be distinct queries made to $H$. For $j\in [c]$, fix $H(i_j):= r_{i_j} \oplus x_j$
        \end{enumerate}
    \end{enumerate}

    \noindent{\bfseries {Ideal World.}}
    \begin{enumerate}
        \item Given codebook $C' = (x_1, \ldots, x_n)$ where $(C',st') \leftarrow Gen'(\ell)$ with $\ell= 100d$, let $H(C') = (z_1, \ldots, z_n)$ (see \cref{def:ro} for $H(\cdot)$ notation). 
        \item Let $\mathbf{a} \leftarrow \cA(H(C'))$ where $\mathbf{a} \in [0,1]^{d'}$. 
        \begin{enumerate}
            \item Let $i_1, \ldots, i_c$ be distinct arbitrary queries made to $H$. For $j \in [c]$, $H$ returns the following answer $H(i_j):= z_{i_j} \oplus x_j$
        \end{enumerate}
    \end{enumerate}
    
    In the \textbf{Real World}, $\cA$ is provided $D=(r_1, \ldots, r_n)$ as input (where $D$ is generated in the same manner as by adversary $\cB$), while the \textbf{Ideal World} is one in which $\cA$ takes $H(C')$ as input. We show that $\cA$ learns the same information in the \textbf{Real World} and the \textbf{Ideal World}, i.e., these views are perfectly indistinguishable. 
Observe that the only difference from the viewpoint of $\cA$ between the \textbf{Real World} and the \textbf{Ideal World} is that $H$ is simulated in the former via indices fixed by $C'_S$ whereas $H$ is queried on arbitrary indices in the latter. Since the rows of $C'$ have already been permuted (recall \cref{alg:gen'}), by nature of the random oracle $H$, these two instances are perfectly indistinguishable.  
    \end{proof}

Recall that the security condition of the fingerprinting code (see \cref{def:fpc}) only holds if $\bar{\mathbf{a}}$ is a feasible codeword for the coalition of rows, i.e., $C'_S$ in our case. 
The following lemma states that if $\cA$ is accurate for $\cP_{RO}(C')$, then the rounded output of $\cA$ is indeed a feasible codeword for both $C'$ and $C'_S$.  
\begin{lemma}\label{clm:success-A}
    Suppose $\cA$ is $(1/3,1/3)$-accurate for $\cP_{RO}(C')$. Then the rounded output $\mathbf{\bar{a}}$ from algorithm $\cA$ is a feasible codeword for both $C'$ and $C'_S$ with probability at least $1-\frac{1}{3}-\frac{1}{100}$.      
    
    In other words, with probability at least $1-\frac{1}{3}-\frac{1}{100}$, $\mathbf{\bar{a}} \in F(C')$ and $\mathbf{\bar{a}} \in F(C'_S)$.
\end{lemma}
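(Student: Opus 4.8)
The plan is to split the claimed failure probability into two parts: the event that $\cA$'s answer is inaccurate on the underlying dataset $C'$, and the event that the PR-PPC code fails its feasible sample property. First I would invoke \cref{clm:indis-pro} so that the rounded vector $\mathbf{\bar a}$ produced inside $\cB$'s simulation has exactly the same distribution as the rounding of $\cA(\cP_{RO}(C'))$; this is what lets the hypothesis ``$\cA$ is $(1/3,1/3)$-accurate for $\cP_{RO}(C')$'' apply to the vector $\mathbf{\bar a}$ that $\cB$ actually feeds to $Trace'$. Given this, with probability at least $1-\tfrac13$ the unrounded answer $\mathbf{a}\in[0,1]^{d'}$ is within additive error $\tfrac13$ of the one-way marginals of $C'$ in every coordinate. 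For any column $j$ that is constant across $C'$ with value $b\in\{0,1\}$, the true marginal is $b$, so $|a^j-b|\le\tfrac13<\tfrac12$ and hence the rounded bit satisfies $\bar a^j=b$; columns of $C'$ containing both a $0$ and a $1$ impose no constraint. Therefore, on the accuracy event, $\mathbf{\bar a}$ agrees with every constant column of $C'$, which is precisely the statement $\mathbf{\bar a}\in F(C')$.

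Next I would bring in \cref{lem:win-sample} with the parameter $\ell=100d$ chosen for the PR-PPC code: it guarantees the feasible sample property with probability at least $1-\tfrac{d}{\ell}=1-\tfrac1{100}$, i.e.\ $\Pr[\textsf{BAD}_S]\le\tfrac1{100}$. Since $\cA$'s view inside $\cB$ is a function of the sampled rows $C'_S$ together with $\cB$'s own fresh randomness, $\cA$ is a valid sampling algorithm in the sense of \cref{def:feasible-sample}, so the feasible sample property applies to its rounded output: on the good event $\neg\textsf{BAD}_S$, $\mathbf{\bar a}\in F(C')$ implies $\mathbf{\bar a}\in F(C'_S)$. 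A union bound over the accuracy failure ($\le\tfrac13$) and the feasible-sample failure ($\le\tfrac1{100}$) then gives $\mathbf{\bar a}\in F(C')\cap F(C'_S)$ with probability at least $1-\tfrac13-\tfrac1{100}$, which is the claim.

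The step requiring the most care is the coupling between the two analyses: we must check that the rounded output $\mathbf{\bar a}$ arising in $\cB$'s simulation is simultaneously (i) the output of an accurate algorithm for $\cP_{RO}(C')$ and (ii) the output of a sampling algorithm that sees only $C'_S$, so that both \cref{lem:win-sample} and the accuracy hypothesis bite on the \emph{same} random variable. This is exactly where the one-time-pad structure of $\cP_{RO}$ and the perfect indistinguishability of \cref{clm:indis-pro} are used. I would also observe that the two failure events are governed by disjoint sources of randomness ($\cA$'s internal coins versus the secret column permutation $\pi$ of $Gen'$), so beyond the union bound no further independence considerations are needed.
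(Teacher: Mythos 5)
Your proof is correct and follows essentially the same route as the paper's: accuracy forces $\mathbf{\bar a}$ to agree with every constant column of $C'$ (hence $\mathbf{\bar a}\in F(C')$ with probability at least $2/3$), then \cref{lem:win-sample} with $\ell=100d$ bounds the feasible-sample failure by $1/100$, and a union bound finishes. Your explicit appeal to \cref{clm:indis-pro} and the remark about the coupling of the two events only make precise what the paper states more informally (that $\cA$ ``effectively only has access to the rows of the underlying sample via queries to $\cB$''), so there is no substantive difference.
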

\begin{proof}
Assuming that $\cA$ is $(1/3,1/3)$-accurate for $\cP_{RO}(C')$, we first show that $\mathbf{\bar{a}}$ is a feasible codeword for $C'$ with probability at least 2/3. By the accuracy guarantee of $\cA$, we know that for any column $i_j$ $\vert \mathbf{a}_{i_j} - a_{i_j} \vert \leq 1/3$ where $a_{i_j}$ is the actual 1-way marginal for column $i_j$ with probability at least 2/3. Thus for any column $i_j$ of all 1's in $C'$, $\mathbf{a}_{i_j} \geq 2/3$ which means $\mathbf{\bar{a}}_{i_j}=1$, thus satisfying the marking condition. A similar argument holds for the case when a column is all 0's.

Next, using the fact that we use a PR-PPC $(n,d,c,100d)$-FPC and that $\mathbf{\bar{a}}\in F(C')$ with probability at least 2/3, we can invoke \cref{lem:win-sample} which states that the feasible sample property is satisfied by our PR-PPC FPC construction. Note that in our case, the sampling algorithm described in \cref{def:feasible-sample} is $\cA$ together with the postprocessing step of rounding the output of $\cA$. Also, even though $\cA$ takes the entire dataset as input, it effectively only has access to the rows of the underlying sample via queries to $\cB$ and thus satisfies the properties required in \cref{def:feasible-sample}. 
\cref{lem:win-sample} states that with probability $\leq \frac{1}{100}$, $\bar{\mathbf{a}}$ is not a feasible codeword for $C'_S$. By a union bound we have that $1-\frac{1}{3}-\frac{1}{100}$, $\mathbf{\bar{a}}$ must be a feasible codeword for $C'_S$. 
\end{proof}

\begin{proof}[Proof of \cref{thm:main-pro-reid}]
From the above \cref{clm:success-A}, we have that $\cA$ is $(1/3,1/3)$-accurate for $\cP_{RO}(C')$ implies that $\bar{\textbf{a}}$ is a feasible codeword for $C'_S$.  By the security of the fingerprinting code, \cref{cor:rpppc,clm:indis-pro}, we have that $\Pr[\bar{\textbf{a}} \in  F(C'_S) \wedge Trace'(\bar{\textbf{a}}) = \perp ]\leq \xi$. Since $\cB$ releases the output of $Trace'(\mathbf{\bar{a}})$, the event $\cB^{\cA}(C'_S) = \perp$ is identical to $Trace'(\bar{\textbf{a}} ) = \perp$. Thus Property \ref{it:reid-sec} of the theorem statement which states that the probability that $\cB$ outputs $\perp$ is bounded by $\xi$ follows. Property \ref{it:reid-sound} follows directly from the soundness property of the fingerprinting code. 
\end{proof}

\begin{lemma}\label{lem:acc-dp}
Suppose there exists a distribution on $c \leq n$ row databases $\cC_\cS$ according to \cref{thm:main-pro-reid}. Then there is no $(\eps,\delta)$-DP algorithm $\cA$ with query complexity $c$ that is $(1/3,1/3)$-accurate for $\cP_{RO}$ with $\eps=O(1)$ and $\delta=o(1/c)$.
\end{lemma}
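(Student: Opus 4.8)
The plan is the classical ``accuracy plus tracing is incompatible with differential privacy'' argument, instantiated with the reidentification adversary $\cB$ from \cref{thm:main-pro-reid}. Suppose toward a contradiction that $\cA$ is an $(\eps,\delta)$-DP algorithm with (row) query complexity $c$ that is $(1/3,1/3)$-accurate for $\cP_{RO}$, where $\eps=O(1)$ and $\delta=o(1/c)$. Apply \cref{thm:main-pro-reid} with the Tardos code of \cref{thm:tardos} at a security level $\xi$ small enough that $c\,e^{\eps}\xi\le \tfrac{1}{100}$ --- legitimate since a Tardos $(n,d,c)$-fingerprinting code exists for every $\xi\in(0,1]$ --- to obtain the distribution $\cC_\cS$ on $c$-row databases, the public row permutation $\pi_R$, and the adversary $\cB$ satisfying Properties~\ref{it:reid-sec} and~\ref{it:reid-sound}.

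The crux is that $\cB^{\cA}$ is a \emph{post-processing of $\cA$ run on neighboring inputs}. Condition on the coins of $Gen'$, which fixes the codebook $C'$, the secret state $st'$, and $\pi_R$. By \cref{clm:indis-pro} the oracle interaction that $\cB$ simulates for $\cA$ when given $C'_S$ is distributed exactly as $\cA$ solving $\cP_{RO}(C')$; the same reasoning applied to $C'_{S_{-i}}$ realizes $\cA$ solving $\cP_{RO}$ on the codebook obtained from $C'$ by replacing its $i$-th row with the fixed element, a database neighboring $C'$. Since everything $\cB$ does beyond running this interaction (rounding, then $Trace'$ with the now-fixed state $st'$) is post-processing, the $(\eps,\delta)$-DP guarantee of $\cA$ transfers: for every $i\in[c]$ and every fixed outcome $k$,
\[
\Pr[\cB^{\cA}(C'_S)=k]\ \le\ e^{\eps}\,\Pr[\cB^{\cA}(C'_{S_{-i}})=k]\ +\ \delta ,
\]
with probabilities over the coins of $Gen'$, $\cB$, and $\cA$. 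Choosing $k=\pi_R^{-1}(i)$ and invoking Property~\ref{it:reid-sound} gives $p_i:=\Pr[\cB^{\cA}(C'_S)=\pi_R^{-1}(i)]\le e^{\eps}\xi+\delta$ for each $i$; since $\pi_R^{-1}(1),\dots,\pi_R^{-1}(c)$ are distinct, summing yields $\sum_{i=1}^{c}p_i=\Pr[\cB^{\cA}(C'_S)\in\{\pi_R^{-1}(i):i\in[c]\}]\le c\,(e^{\eps}\xi+\delta)$.

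For the reverse inequality I would combine Property~\ref{it:reid-sec} with the fingerprinting-code soundness, which by \cref{cor:rpppc} holds for $Trace'$ as well: whenever $\cB^{\cA}(C'_S)\neq\perp$ its value is $Trace'(\bar{\mathbf{a}})$, which lies outside the coalition $\{\pi_R^{-1}(i):i\in[c]\}$ with probability at most $\xi$. Hence $\cB^{\cA}(C'_S)$ falls inside the coalition with probability at least $1-2\xi$, so $\sum_{i=1}^{c}p_i\ge 1-2\xi$. Combining the two bounds, $1-2\xi\le c\,e^{\eps}\xi+c\delta\le \tfrac{1}{100}+c\delta$, which forces $c\delta\ge \tfrac12$ and contradicts $\delta=o(1/c)$; here $\eps=O(1)$ is exactly what let us take $\xi=\Theta(1/c)$ in the first place. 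This contradiction establishes the lemma.

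The step that needs the most care is the ``post-processing on neighboring inputs'' claim: one must verify that $\cB$'s oracle simulation genuinely produces the output distribution of $\cA$ run on an honest codebook (this is precisely \cref{clm:indis-pro}, and is the reason the PR-PPC construction permutes the rows), that replacing the $i$-th coalition row perturbs only a single row of the effective input to $\cA$ so that $\cA$'s neighboring-database guarantee applies, and that the coins of $Gen'$ can be pulled outside the DP inequality --- fixing $\pi_R$ and $st'$, so neither is adversarial --- while the privacy guarantee still holds conditioned on them. The remaining bookkeeping (the choice of $\xi$ and the union bound over the $c$ coalition members) is routine given that \cref{thm:tardos} supplies a code at every security level.
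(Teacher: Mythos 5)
Your proposal is correct and follows essentially the same route as the paper: invoke the reidentification adversary $\cB$ of \cref{thm:main-pro-reid}, note that $\cB^{\cA}$ is a post-processing of $\cA$ on neighboring effective inputs, use Property~\ref{it:reid-sec} plus FPC soundness to show $\cB^{\cA}(C'_S)$ lands in the coalition with constant probability, and contradict this via Property~\ref{it:reid-sound} and the DP inequality. The only cosmetic difference is that you sum the DP inequality over all $i\in[c]$, whereas the paper uses an averaging argument to extract a single $i^*$ with $\Pr[\pi_R(\cB^{\cA}(C'_S))=i^*]\ge\Omega(1/c)$; these are interchangeable, and your explicit treatment of the post-processing/neighboring-input step is, if anything, more careful than the paper's.
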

\begin{proof}
    Suppose $C'_S$ is sampled from the distribution on $c$-row databases $\cC_\cS$ and $\cB$ is the adversary from \cref{thm:main-pro-reid}. From the lemma statement we know that $\cA$ is $(1/3,1/3)$-accurate, thus using \cref{clm:success-A} and \cref{thm:main-pro-reid}, we have that $\Pr[\pi_R(\cB^\cA(C'_S)) \in [c]] \geq 1-\frac{1}{3}-\frac{1}{100}-\xi \geq \Omega(1)$. By an averaging argument, this means that there exists some $i^* \in [c]$ for which $\Pr[\pi_R(\cB^\cA(C'_S))=i^*] \geq \Omega(1/c)$. However, if $\xi = o(1/c)$ by Property \ref{it:reid-sound} in \cref{thm:main-pro-reid} we have that $\Pr[\pi_R(\cB^{\cA}(C'_{S_{-i^*}})) =i^*] \leq \xi = o(1/c)$. 
    
    In other words, the probability of $\cB^\cA$ outputting a fixed output $i^*$ on neighboring input databases $C'_{S}$ and $C'_{S_{-i^*}}$ is different, which violates $(\eps,\delta)$-DP for any $\eps=O(1)$ and $\delta=o(1/c)$. We note here that since $\cA$ can make at most $c$ row queries, the DP guarantee for $\cA$ must hold for any neighboring sample of $c$ rows. Since $\cB$ does some postprocessing of the output from $\cA$, and we have shown that $\cB$ cannot be $(\eps,\delta)$-DP, this implies that $\cA$ cannot be $(\eps,\delta)$-DP for any $\eps=O(1)$ and $\delta=o(1/c)$.   

\end{proof}

\subsection{Using a Secret Sharing Encoding}\label{sec:ss}
In this section, we remove the requirement of an algorithm querying an entire row that we enforced in the previous section. We first define the security requirement of a general encoding scheme that is sufficient to construct our DP lower bound in \cref{def:sec-game}. We then show that the Shamir encoding as defined in~\cref{def:ss-bit} satisfies the security requirement (see~\cref{thm:sec-A}). We define a problem based on this secret sharing encoding called $\cP_{SS}$ (see~\cref{def:ss-our}) that uses the encoding to release the one-way marginals of an underlying dataset. Finally, we show that this problem cannot have a sublinear time DP algorithm with reasonable accuracy (see~\cref{thm:main-ss}). Intuitively, the security guarantee of the secret sharing scheme will force any algorithm that solves $\cP_{SS}$ to query many attributes (per row), thus giving our final lower bound.

\begin{definition}[Security Game] \label{def:sec-game} 
Let $\mathtt{Exp}(\mathsf{Enc}_d,\mathcal{A},q,d,x)$ denote the following experiment: (1) the challenger computes $y_0 \leftarrow \mathsf{Enc}_d(x)$ and $y_1 \leftarrow \mathsf{Enc}_d(0^d)$, picks a random bit $b$ and outputs $y= y_b$. (2) $\mathcal{A}^y(d,q,x)$ is given oracle access to $y$ and may make up to $q$ queries to the string $y$. (3) The game ends when the attacker $\cA$ outputs a guess $b'$. (4) The output of the experiment is $\mathtt{Exp}(\mathsf{Enc}_d,\mathcal{A},q,d,x)=1$ if $b'=b$ and the attacker made at most $q$ queries to $y$; otherwise the output of the experiment is $\mathtt{Exp}(\mathcal{A},q,d,x)=0$. We say that the scheme $\mathsf{Enc}_d$ is  $(q(d),d,\gamma(q,d))$-secure if for all $x \in \{0,1\}^d$ and all attackers $\mathcal{A}$ making at most $q$ queries we have 
\[ \Pr[\mathtt{Exp}(\mathsf{Enc}_d,\mathcal{A},q,d,x)=1] \leq \frac{1}{2} + \gamma(q,d) \]
\end{definition}

\begin{definition} (Shamir Encoding) \label{def:ss-bit}
 Given a row $x_i \in \{0,1\}^d$ where $i\in [n]$ and a field $\bbF$ s.t. $|\bbF| > {4d}$ let $SS_d(x_i)$ be the following encoding (1) pick random field elements $\alpha^{(i)}_1,\ldots, \alpha^{(i)}_d, \alpha^{(i)}_{d+1},\ldots, \alpha^{(i)}_{3d}$ (distinct) and $z^{(i)}_{d+1},\ldots,z^{(i)}_{2d}$ and define the polynomial $p_i(\cdot)$ of degree $2d-1$ s.t. $p_i(\alpha^{(i)}_j) = x_j$ and $p_i(\alpha^{(i)}_{d+j}) = z^{(i)}_{d+j}$ for $j \leq d$. (2) publish  $SS_d(x_i)=\left(\alpha^{(i)}_1,\ldots,\alpha^{(i)}_d, \{(\alpha^{(i)}_j,p_i(\alpha^{(i)}_j))\}_{j=d+1}^{3d}\right)$ as share of $x_i$. 
 \end{definition}

\begin{definition}[Secret Sharing Problem $\cP_{SS}$] \label{def:ss-our}
Let dataset $D:=(x_1,\ldots,x_n) \in \{0,1\}^{n\times d}$. Given $D_S:= (SS_d(x_1),\ldots, SS_d(x_n))$, the goal of the secret-sharing problem $\cP_{SS}$ is to release all the one-way marginals of dataset $D$.

We use $\cP_{SS,d}(D)$ to denote that $\cP_{SS}$ releases the one-way marginals of the underlying dataset $D$ with dimension $d$. 
\end{definition}

\subparagraph*{Query Model.} On input $D_S$, an algorithm solving the $\cP_{SS}$ problem can make attribute queries to obtain the underlying dataset $D$ and release its one-way marginals. For a row $i \in [n]$, the $i_j$-th attribute query returns the pair of field elements $(\alpha^{(i)}_{j+d},p(\alpha^{(i)}_{j+d}))$ of share $SS_d(x_i)$ for $1 \leq j \leq 2d$. We note that the prefix of $SS_d(x_i)$ given by $\alpha^{(i)}_1,\ldots,\alpha^{(i)}_d$ is published separately after an attribute query for the row $i$ has been queried.  In other words, the prefix does not count towards the query complexity of the algorithm. 

\begin{remark}
    We remark that one can also define a different query model in which the prefix is released to the adversary whenever the $i$-th row is queried and our results still hold. 
\end{remark}

For completeness, we first show that the Shamir encoding $SS_d$ defined in \cref{def:ss-bit} is $(q(d),d,0)$-secure (as defined in \cref{def:sec-game}) where $q(d)=d$. 
\begin{theorem} \label{thm:sec-A} 
The scheme $SS_{d}$ is  $(d,d,0)$-secure.  
\end{theorem}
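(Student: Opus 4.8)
The plan is to show that any adversary making at most $q = d$ attribute queries to the encoding learns \emph{nothing} about the underlying bit-vector $x$, so its guessing probability is exactly $1/2$. The key structural observation is that in $SS_d(x_i)$, the data bits $x_1,\dots,x_d$ are the evaluations of a degree-$(2d-1)$ polynomial $p_i$ at the points $\alpha^{(i)}_1,\dots,\alpha^{(i)}_d$, while the points $\alpha^{(i)}_1,\dots,\alpha^{(i)}_d$ themselves are revealed for free (in the prefix) and the "mask" values $z^{(i)}_{d+1},\dots,z^{(i)}_{2d} = p_i(\alpha^{(i)}_{d+1}),\dots,p_i(\alpha^{(i)}_{2d})$ are chosen uniformly and independently at random. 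An attribute query returns one pair $(\alpha^{(i)}_{j+d}, p_i(\alpha^{(i)}_{j+d}))$ for $1\le j\le 2d$; there are $2d$ such queryable pairs, and the security claim is that fewer than $d{+}1$ of them leave $x$ information-theoretically hidden.

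First I would reduce to a single row: the encodings of different rows use independent randomness, and in the security game of \cref{def:sec-game} the challenge is a single string $y = \mathsf{Enc}_d(x)$ (here $\mathsf{Enc}_d = SS_d$ applied to one $d$-bit string), so it suffices to analyze one share. Fix the prefix points $\alpha_1,\dots,\alpha_d$ and the queryable points $\alpha_{d+1},\dots,\alpha_{3d}$ (these are public / revealed and carry no information about $x$, since they are sampled independently of $x$). Now condition on the adversary having queried some set $T$ of at most $d$ of the $2d$ pairs, learning $p(\alpha_{d+j})$ for $j \in T$. The core claim is: conditioned on this view, the vector $(x_1,\dots,x_d) = (p(\alpha_1),\dots,p(\alpha_d))$ is \emph{equally likely to be any element of $\{0,1\}^d$} — equivalently, $\Pr[y = y_b \mid \text{view}]$ is the same for $b=0$ and $b=1$. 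I would prove this via the standard Lagrange-interpolation / polynomial-secret-sharing argument: a degree-$(2d-1)$ polynomial is determined by $2d$ evaluations; the adversary's view pins down at most $|T| \le d$ evaluations; the remaining $2d - |T| \ge d$ evaluation points include all $d$ data points $\alpha_1,\dots,\alpha_d$, and for \emph{any} target assignment of values at those $d$ points there remain at least $2d - |T| - d \ge 0$ further free evaluations (at un-queried mask points) that can be set to make a consistent polynomial. More precisely: since $z_{d+1},\dots,z_{2d}$ are uniform and independent, one shows the map from $(x_1,\dots,x_d, z_{d+1},\dots,z_{2d}) \in \mathbb{F}^{2d}$ (restricted to $x$'s in $\{0,1\}^d$) to the tuple of all $2d$ evaluations $(p(\alpha_1),\dots,p(\alpha_{2d}))$ is a bijection onto $\{0,1\}^d \times \mathbb{F}^d$, so the marginal distribution of any $\le d$ of these evaluations, \emph{once we also forget $x$}, is independent of $x$; projecting onto the queried coordinates $\{p(\alpha_{d+j}) : j\in T\}$ (all of which lie among $p(\alpha_{d+1}),\dots,p(\alpha_{2d})$ when $T \subseteq \{1,\dots,d\}$, and are likewise uniform/consistent when $T$ hits the "upper" points $\alpha_{2d+1},\dots,\alpha_{3d}$) gives a distribution functionally independent of $x$.

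Having established that the adversary's entire transcript — prefix, queried points, and queried values — has a distribution that does not depend on $x$ (more carefully: the posterior on "$b=0$ vs $b=1$" equals the prior $1/2$ after any $\le d$ queries), I would conclude by noting that the adversary's guess $b'$ is a (possibly randomized) function of this transcript, hence $\Pr[b' = b] = 1/2$, i.e. $\gamma(d,d) = 0$. Finally, handle the edge case where the adversary queries fewer than $d$ pairs or stops early: monotonicity of information (fewer queries reveal no more) gives the bound immediately, and if it exceeds $d$ queries the experiment outputs $0$ by definition, which only helps.

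The main obstacle is bookkeeping the point sets cleanly: the share publishes $3d$ evaluation points ($d$ "data" points $\alpha_1..\alpha_d$ and $2d$ "queryable" points $\alpha_{d+1}..\alpha_{3d}$) but the polynomial has degree $2d-1$, so it is over-determined by $3d$ evaluations and only $2d$ of them are algebraically free. The correct framing is that $p$ is \emph{sampled} by choosing $x \in \{0,1\}^d$ and $z_{d+1},\dots,z_{2d} \in \mathbb{F}$ uniformly and interpolating through the $2d$ pairs $\{(\alpha_j, x_j)\}_{j\le d} \cup \{(\alpha_{d+j}, z_{d+j})\}_{j\le d}$; the values $p(\alpha_{2d+1}),\dots,p(\alpha_{3d})$ at the remaining $d$ points are then \emph{determined} but are deterministic functions of the free choices, and one must check that any $\le d$ of the $2d$ queryable evaluations still leave $x$ uniform. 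I would argue this by a rank/bijection argument on the linear map $\mathbb{F}^{2d}\to\mathbb{F}^{2d}$ sending the $2d$ free coefficients (equivalently the $2d$ interpolation values) to any chosen $2d$ evaluation coordinates — this map is invertible (Vandermonde), so fixing $\le d$ queried evaluation coordinates, the induced distribution on the $d$ data coordinates $x$ is still uniform over whatever support the free $z$'s and $\{0,1\}$-constraint allow, and in particular its dependence on the queried values is the same for every $x\in\{0,1\}^d$. This is exactly the classical argument that $t$-out-of-$n$ Shamir sharing hides the secret against $<t$ shares, here with $t = d+1$ effective threshold on the $2d$ mask shares; I would cite or reproduce it at that level of detail.
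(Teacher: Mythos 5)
Your proposal is correct and rests on the same information-theoretic idea as the paper's proof: the view of any $d$-query adversary is perfectly simulatable without knowledge of $x$, so the guessing advantage is exactly $0$. The paper's own argument is shorter and, as written, only treats the case where the adversary queries the first $d$ pairs $\{(\alpha_{d+j}, z_{d+j})\}_{j\in[d]}$, whose values are by construction the freshly sampled uniform field elements $z_{d+1},\ldots,z_{2d}$ — in that case the transcript is literally a tuple of independent uniform elements and the Real/Ideal worlds coincide trivially. Your write-up goes further by explicitly handling adversaries that query some of the \emph{determined} evaluations $p(\alpha_{2d+1}),\ldots,p(\alpha_{3d})$, via the Vandermonde/rank argument showing that any $\le d$ of the $2d$ queryable evaluations of the degree-$(2d-1)$ polynomial remain jointly uniform and independent of $x$. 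That case is genuinely needed for the theorem as stated (the query model of \cref{def:ss-bit} allows $1\le j\le 2d$), so your extra bookkeeping is not wasted effort; it fills in a case the paper's proof leaves implicit. The only caveat is to make sure the distinctness of the $\alpha$'s is invoked where the Vandermonde invertibility is used, and to state the reduction to a single row explicitly (as you do), since the security game in \cref{def:sec-game} concerns one encoding.
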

\begin{proof}
Let ${x}\in \{0,1\}^{d}$ and field $\bbF$ s.t. $|\bbF| > {4d}$. \sloppy Recall the secret sharing scheme $SS_{d}({x})= (\alpha_1, \ldots, \alpha_{d}, \{(\alpha_j, p(\alpha_j))\}^{3d}_{j=d+1})$ defined in \cref{def:ss-bit}. We describe two experiments below where the Real World experiment simulates the view of the adversary and the Ideal World experiment just randomly outputs field elements. We will show that these two experiments are perfectly indistinguishable, and the security claim follows. 

\noindent{\bfseries Real World(${x}$).}
\begin{enumerate}
    \item Query $SS_{d}({x})$ for the first $q(d)=d$ pairs of coordinates and let the answers be the prefix $\alpha_1, \ldots \alpha_d$ and $\{(\alpha_{j+d}, z_{j+d})\}_{j \in [d]}$. 
    \item Output $\alpha_1, \ldots \alpha_d$ and $\{(\alpha_{j+d}, z_{j+d})\}_{j \in [d]}$
\end{enumerate}

\noindent{\bfseries Ideal World(${x}$).}    
\begin{enumerate}
    \item Uniformly sample $\alpha'_1,\ldots,\alpha'_{d},\alpha'_{d+1},\ldots,\alpha'_{2d}, r_{d+1}, \ldots, r_{2d}$ from $\bbF$.
    \item Output $\alpha'_1, \ldots \alpha'_d$ and $\{(\alpha'_{j+d}, r_{j+d})\}_{j \in [d]}$
\end{enumerate}

Since by construction, the first $d$ pairs of coordinates returned by $SS_{d}$ and the prefix of size $d$  correspond to $3d$ random field elements, the view of the $\textbf{Real World}$ is therefore just the uniform distribution on $3d$ field elements and thus is identical to that of the view of the $\textbf{Ideal World}$.   
\end{proof}

We present our main lower bound result in \cref{thm:main-ss}. Before we proceed, we first demonstrate the existence of a DP linear-time algorithm and non-DP sublinear-time algorithm for $\cP_{SS}$ below. 
\begin{theorem}\label{thm:dp-pss}
    There exists a $(\eps,\delta)$-DP algorithm that solves the problem $\cP_{SS}$ with $O(d \sqrt{d})$ attribute query complexity and $(1/10,1/10)$-accuracy. 
\end{theorem}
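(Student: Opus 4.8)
The plan is to construct an $(\eps,\delta)$-DP algorithm for $\cP_{SS}$ by querying the entire encoded dataset, reconstructing the underlying rows $x_1,\dots,x_n$, and then running the Gaussian mechanism on the one-way marginals exactly as in \cref{lem:gauss}. Concretely: for each row $i\in[n]$, the algorithm issues all $2d$ attribute queries, obtaining the pairs $\{(\alpha^{(i)}_{d+j},p_i(\alpha^{(i)}_{d+j}))\}_{j=1}^{2d}$ together with the published prefix $\alpha^{(i)}_1,\dots,\alpha^{(i)}_d$; this gives $2d$ point-value pairs of the degree-$(2d-1)$ polynomial $p_i$, which uniquely determines $p_i$ by Lagrange interpolation, hence recovers $x_j = p_i(\alpha^{(i)}_j)$ for all $j\le d$. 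The total query complexity is $n\cdot 2d = O(d\sqrt{d})$ since $n = \Theta(\sqrt{d})$ (recall the problem space has size $m = n\cdot d = \Omega(d\sqrt d)$), and interpolation over $\bbF$ with $|\bbF| > 4d$ takes $\poly(d)$ time per row.

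Once the dataset $D=(x_1,\dots,x_n)\in\{0,1\}^{n\times d}$ is reconstructed, the algorithm computes the true one-way marginals $a_\cP(D)\in[0,1]^d$ and releases $a_\cP(D) + (N_1,\dots,N_d)$ where each $N_k \sim \cN(0,\sigma^2)$ with $\sigma$ chosen as in \cref{lem:gauss}. Since the $\ell_2$-sensitivity of the one-way marginals map $D\mapsto a_\cP(D)$ is $\sqrt d / n$ (changing one row changes each coordinate by at most $1/n$), the Gaussian Mechanism theorem gives $(\eps,\delta)$-DP provided $n \geq \sqrt{2\ln(1.25/\delta)}\cdot(\sqrt d/n)/\sigma$ rearranged appropriately; taking $n \geq \sqrt{200 d \ln(20d)\ln(1.25/\delta)}/\eps \in \tilde\Omega(\sqrt d)$ makes $\sigma = 1/\sqrt{200\ln(10d)}$, and a Cramér–Chernoff tail bound plus a union bound over the $d$ coordinates gives that all coordinates are within $1/10$ of the truth except with probability $\le d\cdot \tfrac{1}{20d} = 1/20 < 1/10$, so the output is $(1/10,1/10)$-accurate. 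Privacy is immediate because the DP guarantee of the Gaussian mechanism is closed under the post-processing-free pipeline: the reconstruction step is a deterministic function of $D_S$, and neighboring datasets $D \sim D'$ induce encoded datasets whose reconstructions are neighboring, so the whole algorithm is $(\eps,\delta)$-DP.

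I do not expect a serious obstacle here — this theorem is the ``easy direction'' establishing that $\cP_{SS}$ is not vacuously hard. The one point requiring a little care is the interface between the query model of \cref{def:ss-our} and the DP neighboring relation: I should note explicitly that the prefix $\alpha^{(i)}_1,\dots,\alpha^{(i)}_d$ is released for free once row $i$ is touched, so that $2d$ attribute queries per row genuinely suffice to interpolate $p_i$, and that reconstructing $D$ from $D_S$ is deterministic so no privacy is lost in that step. Everything else is a direct invocation of \cref{lem:gauss} (or its proof) with the observation that the query cost of reading all of $D_S$ is $O(nd) = O(d\sqrt d)$, matching the claimed bound.
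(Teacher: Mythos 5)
Your proposal is correct and follows essentially the same route as the paper: query all $2d$ shares per row (total $O(nd)=O(d\sqrt d)$ attribute queries), recover each $p_i$ by Lagrange interpolation, evaluate at the prefix points to reconstruct $D$, and then apply the Gaussian mechanism of \cref{lem:gauss}. Your added remarks on the free prefix and on the determinism of the reconstruction step are harmless elaborations of the same argument.
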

\begin{proof}
     On input $D_S$, the algorithm queries the entire dataset via attribute queries, i.e., it makes $dn = O(d \sqrt{d})$ queries. Given $SS(x_i) = (\alpha^{(i)}_1, \ldots, \alpha^{(i)}_{d}, \{(\alpha^{(i)}_j, p_i(\alpha^{(i)}_j))\}^{3d}_{j=d+1})$ for a row $i \in [n]$, the algorithm first recovers the polynomial $p_i$ of degree $2d-1$ by doing Lagrange Interpolation over the $2d$ points given by $\{(\alpha^{(i)}_j, p_i(\alpha^{(i)}_j))\}^{3d}_{j=d+1}$. Then the original row $x_i$ is obtained by evaluating $(p_i(\alpha^{(i)}_1), \ldots, p_i(\alpha^{(i)}_d))$. Once the original rows $x_i,\ldots,x_n$ are recovered in this manner, the algorithm can release the one-way marginals by adding Gaussian noise as detailed in \cref{lem:gauss}.
\end{proof}

\begin{theorem}\label{thm:subtime-pss}
    There exists a sublinear-time algorithm that solves the problem $\cP_{SS}$ with $O(d \log d)$ attribute query complexity and $(1/10,1/10)$-accuracy. 
\end{theorem}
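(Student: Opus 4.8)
The plan is to subsample rows exactly as in the linear-time algorithm of \cref{thm:dp-pss}, but to read only logarithmically many rows and to replace the added Gaussian noise by a plain empirical average. Set $k := \lceil 50\ln(20d)\rceil = O(\log d)$. The algorithm draws indices $i_1,\dots,i_k \in [n]$ independently and uniformly at random; for each sampled row $i_\ell$ it makes all $2d$ attribute queries $j=1,\dots,2d$, obtaining the $2d$ pairs $\{(\alpha^{(i_\ell)}_{d+j},\,p_{i_\ell}(\alpha^{(i_\ell)}_{d+j}))\}_{j=1}^{2d}$ together with the prefix $\alpha^{(i_\ell)}_1,\dots,\alpha^{(i_\ell)}_d$, which is published for free once a row has been queried. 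Since $p_{i_\ell}$ has degree $2d-1$, these $2d$ evaluation points determine it uniquely; the algorithm recovers $p_{i_\ell}$ by interpolation and then evaluates it at $\alpha^{(i_\ell)}_1,\dots,\alpha^{(i_\ell)}_d$ to obtain the true bits $x_{i_\ell,1},\dots,x_{i_\ell,d}$ of that row. Finally it outputs the vector $\hat a\in[0,1]^d$ with $\hat a_t := \tfrac1k\sum_{\ell=1}^k x_{i_\ell,t}$ for each column $t\in[d]$.

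For the resource bounds I would argue as follows. The algorithm makes $2d$ queries for each of the $k$ sampled rows, so its attribute-query complexity is $2dk = O(d\log d)$, which is $o(m)$ since $m=n\cdot d=\Omega(d\sqrt d)$. Besides $O(dk)=O(d\log d)$ bookkeeping to form the $d$ averages, the only cost is, per sampled row, interpolating a degree-$(2d-1)$ polynomial from $2d$ points and performing a multipoint evaluation at $d$ points; with fast polynomial arithmetic (subproduct trees / fast multipoint evaluation and interpolation) this takes $\tilde O(d)$ field operations per row, hence $\tilde O(d)$ in total over the $k=O(\log d)$ rows. Thus the running time is $\tilde O(d) = o(d\sqrt d) = o(m)$, i.e., sublinear. (A naive $O(d^2)$ interpolation per row already suffices whenever $n=\omega(\sqrt d\log d)$; fast arithmetic is needed only to cover the hardest regime $n=\Theta(\sqrt d)$.)

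It remains to verify $(1/10,1/10)$-accuracy. The Shamir encoding is noiseless, so each recovered value equals the true bit $x_{i_\ell,t}$, and the exact one-way marginal of column $t$ is $a_t(D)=\tfrac1n\sum_{i=1}^n x_{i,t}$. Fix a column $t$. The random variables $x_{i_1,t},\dots,x_{i_k,t}$ are i.i.d.\ $\{0,1\}$-valued with common mean $a_t(D)$ (the randomness being the draw of the indices), so Hoeffding's inequality gives $\Pr\!\big[\,|\hat a_t - a_t(D)| > \tfrac1{10}\,\big] \le 2\exp(-2k/100) = 2e^{-k/50} \le \tfrac1{10d}$ by the choice of $k$. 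A union bound over the $d$ columns yields $\Pr\!\big[\exists\, t:\ |\hat a_t - a_t(D)| > \tfrac1{10}\big] \le \tfrac1{10}$, so with probability at least $9/10$ every coordinate of the output is within $1/10$ of the exact one-way marginals; hence the algorithm is $(1/10,1/10)$-accurate for $\cP_{SS}$.

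I do not anticipate a substantive obstacle: this is the textbook ``subsample-and-average'' sublinear estimator, coupled with exact row recovery from Shamir shares (the same recovery step used in \cref{thm:dp-pss}). The two points that require care are (i) keeping the \emph{time} complexity --- not just the query complexity --- sublinear, which is why fast polynomial interpolation and multipoint evaluation are invoked in place of naive Lagrange interpolation, and (ii) fixing the constant in $k$ so that the per-column deviation probability is at most $1/(10d)$ and the union bound over the $d$ columns closes to failure probability $1/10$.
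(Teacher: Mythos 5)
Your proposal is correct and follows essentially the same route as the paper's proof: sample $O(\log d)$ rows, recover each via the interpolation-based decoding from \cref{thm:dp-pss} using all $2d$ attribute queries per row, output the empirical column averages, and conclude accuracy by Hoeffding plus a union bound over the $d$ columns. Your write-up merely makes explicit the constants and the fast-interpolation point that the paper leaves implicit.
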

\begin{proof}
The algorithm makes $O(d \log d)$ attribute queries and performs the same decoding procedure as outlined in the proof of \cref{thm:dp-pss} to obtain the underlying $\log(d)$ rows and computes the one-way marginals on this subset of rows. The accuracy of this algorithm is a simple corollary of Hoeffding bounds. Recall that $n \in \Omega(\sqrt{d})$, so the problem size is $\Omega(d \sqrt{d})$.
\end{proof}

\begin{theorem} [Main Theorem]\label{thm:main-ss}
Any algorithm that solves the problem $\cP_{SS}$ with $s$ attribute query complexity, $(1/3,1/3)$-accuracy and $(O(1),o(1/n))$-DP must have $s=\Omega(d\sqrt{d/\log(d)})$.
\end{theorem}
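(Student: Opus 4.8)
Informally, the plan is to run the warm-up argument of \cref{sec:ro} (the proof of \cref{thm:main-ro-inf}) almost verbatim, replacing the random-oracle one-time pad by the Shamir encoding $SS_{d'}$ of \cref{def:ss-bit} and the information-theoretic hiding of the pad by the perfect $(d',d',0)$-security of $SS_{d'}$ established in \cref{thm:sec-A}. Concretely: fix a Tardos $(n,d,c)$-fingerprinting code (\cref{thm:tardos}) with security $\xi$ a fixed inverse polynomial in $n$, build the PR-PPC $(n,d,c,100d)$-FPC of \cref{sec:prppc} with codebook $C'\in\{0,1\}^{n\times d'}$ where $d'=101d$, and instantiate $\cP_{SS}$ in dimension $d'$. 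Suppose toward a contradiction that $\cA$ solves this $\cP_{SS}$ with $s\le c\,d'$ attribute queries, $(1/3,1/3)$-accuracy, and $(O(1),o(1/n))$-DP. Since $\cA$ makes at most $s\le c\,d'$ queries in total, it queries strictly fewer than $c$ distinct rows more than $d'$ times; and by \cref{thm:sec-A}, for every row queried at most $d'$ times the answers (together with the free prefix) are drawn from a distribution independent of that row's underlying bits.

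I would then prove the analogue of \cref{thm:main-pro-reid}: with $\cC_{\cS}$ the distribution that outputs the first $c$ rows of $C'\leftarrow Gen'(n,100d)$ and $\pi_R$ the public row permutation, there is a reidentification adversary $\cB$. On input $C'_S$, $\cB$ simulates $\cA$'s share oracle by answering the first $d'$ queries to each row with fresh coordinates sampled from the content-independent distribution of \cref{thm:sec-A}; the first time $\cA$ asks a $(d'+1)$-st query of some row, $\cB$ ``commits'' that row to the next unused codeword $x_j$ of $C'_S$ and thereafter answers it consistently with $SS_{d'}(x_j)$ conditioned on the coordinates already revealed --- a conditional distribution that is well defined precisely because $\gamma=0$. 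Fewer than $c$ rows are ever committed, so $\cB$ never exhausts $C'_S$. Finally $\cB$ rounds $\cA$'s output to $\bar{\mathbf a}\in\{0,1\}^{d'}$ and returns $Trace'(\bar{\mathbf a})$.

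The step I expect to be the main obstacle is the analogue of \cref{clm:indis-pro}: showing that $\cA$'s view under $\cB$ is \emph{identically} distributed to $\cA$ solving $\cP_{SS,d'}(C')$. The subtlety is that the set of rows $\cA$ queries more than $d'$ times --- hence the set of rows that become coalition members --- is chosen adaptively by $\cA$ from the answers it has already seen, so one must argue that this choice is blind to the codebook until the instant of commitment. This is exactly where the two new ingredients are essential: perfect security ($\gamma=0$) makes the pre-commitment answers literally a content-independent distribution, so $\cA$'s commitment pattern is independent of $C'$; and the uniformly random $\pi_R$ makes the committed codewords a uniformly random sub-multiset of the codebook, so the simulated instance equals $\cP_{SS,d'}(C')$ up to a relabelling of rows that leaves $\cA$'s output distribution and the $Trace'$ guarantee invariant. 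Granting this, \cref{clm:success-A} goes through: $(1/3,1/3)$-accuracy gives $\bar{\mathbf a}\in F(C')$ with probability $\ge 2/3$, and the feasible-sample property of PR-PPC (\cref{lem:win-sample} with $\ell=100d$, whose proof uses only the $\ell$ padded columns and so applies equally to the set of committed rows, which is a subset of $S$) gives $\bar{\mathbf a}\in F(C'_S)$ with probability $\ge 2/3-1/100$; then \cref{cor:rpppc} yields, as in \cref{thm:main-pro-reid}, that $\cB$ returns a coalition member with probability $\Omega(1)$ and that $\Pr[\cB^\cA(C'_{S_{-i}})=\pi_R^{-1}(i)]\le\xi$ for every $i$.

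The finish is the argument of \cref{lem:acc-dp} verbatim: $\cB$ touches $C'_S$ only to feed $\cA$'s oracle --- through a map under which changing one row of $C'_S$ changes at most one row of $\cA$'s simulated input --- and to invoke $Trace'$ with the secret state, so $\cB^\cA$ inherits $(O(1),o(1/c))$-DP from $\cA$ by post-processing; but the two reidentification bounds above are incompatible with $(O(1),\delta)$-DP once $\xi,\delta=o(1/c)$, and since $c\le n$ the hypothesis $\delta=o(1/n)$ gives $\delta=o(1/c)$. This rules out $s\le c\,d'$. With the Tardos parameters $d=O(c^2\log(n/\xi))$, $\xi$ inverse-polynomial, and $n=\Theta(\sqrt d)$, we get $c=\Theta(\sqrt{d/\log d})$, hence $s=\Omega(c\,d')=\Omega(d\sqrt{d/\log d})$; the $\log d$ loss is inherited from the Tardos code length, and the bound on query complexity is also a bound on time complexity.
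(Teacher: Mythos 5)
Your proposal is correct and follows essentially the same route as the paper: the same PR-PPC codebook with $\ell=100d$, the same lazy simulation of the Shamir shares (random answers for the first $d'$ queries per row, commitment to the next codeword row on the $(d'+1)$-st query, justified by the $(d',d',0)$-security of \cref{thm:sec-A}), the same feasibility and $Trace'$ analysis via \cref{lem:win-sample} and \cref{cor:rpppc}, and the same final accounting $s\geq c\,d'$ with Tardos parameters. If anything, your explicit treatment of the adaptivity of the commitment pattern is more careful than the paper's, which states the simulation and the invariant $t\leq c$ without spelling out the indistinguishability argument in the secret-sharing section.
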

In order to prove \cref{thm:main-ss}, we follow a similar strategy as presented in the warm-up \cref{sec:ro}. Given an $(n,d,c)$-FPC, we first show how to construct a $c$-row distribution and an adversary $\cB$ that can identify a user in the coalition of size $c$ in \cref{thm:main-pro-reid-ss}.  
\begin{theorem}\label{thm:main-pro-reid-ss}
  For every $n,d \in \bbN$, $\xi \in [0,1]$ and $c \leq n$, if there exists an $(n,d,c)$-fingerprinting code with security $\xi$, then there exists a distribution on $c$-row databases $\cC_\cS$,  a row permutation $\pi_R:[n] \to [n]$ and an adversary $\cB$ for every randomized algorithm $\cA$ with attribute query complexity $cd'$ and $(1/3,1/3)$-accuracy for $\cP_{SS}$ such that
   \begin{enumerate}
       \item \label{it:ss-sec}$\Pr_{C'_S\leftarrow \cC_\cS} [\cB^{\cA}(C'_S ) = \perp] \leq \xi$       
       \item \label{it:ss-sound} For every $i \in [c]$, $\Pr_{C'_S\leftarrow \cC_\cS}[\cB^{\cA}(C'_{S_{-i}}) = \pi_R^{-1}(i)] \leq \xi$.
   \end{enumerate}
   where $d'=101d$ and the probability is over the random coins of $\cB$ and the choice of $C'_S$.
   
\end{theorem}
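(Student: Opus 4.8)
The plan is to follow the proof of \cref{thm:main-pro-reid} essentially line for line, replacing the random-oracle encoding $H$ by the Shamir encoding $SS_{d'}$ and adapting the adversary's commitment rule to the finer per-attribute query model. As there, let $(Gen,Trace)$ be the promised $(n,d,c)$-FPC, build the PR-PPC $(n,d,c,\ell)$-FPC of \cref{sec:prppc} with $\ell:=100d$, so that $C'\leftarrow Gen'(n,100d)$ lies in $\{0,1\}^{n\times d'}$ with $d'=101d$ and $Gen'$ also outputs the public row permutation $\pi_R$; let $\cC_\cS$ be the distribution ``run $Gen'(n,100d)$ and output the first $c$ rows $C'_S=(x_1,\dots,x_c)$ of $C'$''.

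The adversary $\cB$, on input $C'_S$, lazily simulates a $\cP_{SS}$ instance with underlying dimension $d'$ on an $n$-row database for $\cA$. For each row $i$ it keeps a counter of the attribute queries $\cA$ has issued to row $i$; the first such query triggers $\cB$ to publish a uniformly random distinct prefix $\alpha^{(i)}_1,\dots,\alpha^{(i)}_{d'}$ (which leaks nothing, since the evaluation points of $SS_{d'}$ are independent of the secret). While row $i$ has received at most $d'$ queries, $\cB$ answers each with a fresh pair of uniform field elements; by \cref{thm:sec-A} these are distributed exactly as the first $d'$ attribute answers of a genuine $SS_{d'}(x)$ for any $x$, because those answers are precisely the random ``$z$-pairs''. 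The moment $\cA$ makes the $(d'+1)$-th query on row $i$, $\cB$ commits the next as-yet-unused codeword $x_j$ of $C'_S$ to row $i$: it retroactively declares the already-published prefix and $z$-pairs to be the corresponding uniform choices inside $SS_{d'}(x_j)$ (legitimate, because in a genuine encoding those choices are uniform and independent of $x_j$), which pins down the degree-$(2d'-1)$ interpolating polynomial $p_j$, and answers every further query on row $i$ with the forced evaluations. Finally $\cB$ rounds $\cA$'s output $\mathbf{a}\in[0,1]^{d'}$ to $\bar{\mathbf{a}}\in\{0,1\}^{d'}$ and returns $Trace'(\bar{\mathbf{a}})$. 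The key counting point is that $\cA$ issues at most $cd'$ attribute queries in total, so at most $\lfloor cd'/(d'+1)\rfloor\le c-1$ rows ever cross the commitment threshold; hence $\cB$ never exhausts its $c$ codewords, and the committed set $T$ (the effective ``coalition'') satisfies $|T|<c$.

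The remaining analysis mirrors \cref{clm:indis-pro}, \cref{clm:success-A}, and the proof of \cref{thm:main-pro-reid}. First, an indistinguishability step: $\cA$ run inside $\cB$ is perfectly indistinguishable from $\cA$ solving $\cP_{SS,d'}(C')$ on input $(SS_{d'}(x_1),\dots,SS_{d'}(x_n))$ for the full row-permuted padded codebook; this uses \cref{thm:sec-A} to match the pre-commitment answers, the fact that a genuine $SS_{d'}(x_j)$ is uniform over exactly the choices $\cB$ fixes retroactively, and, as in the warm-up, the row permutation $\pi_R$ to argue that committing codewords in threshold-crossing order to whichever rows $\cA$ queries heavily has the same distribution as pre-assigning codewords to a uniformly random tuple of rows. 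Second, feasibility: $(1/3,1/3)$-accuracy for $\cP_{SS,d'}(C')$ forces, on every all-$b$ column of $C'$, an output coordinate within $1/3$ of $b$, so $\bar{\mathbf{a}}\in F(C')$ with probability at least $2/3$; since the security of $SS_{d'}$ makes the queries to uncommitted rows independent of the codebook, $\cA$ followed by $\cB$'s rounding is (up to internal randomness) a function of $C'_T$ only, so \cref{lem:win-sample} with $\ell=100d$ gives $\bar{\mathbf{a}}\notin F(C'_T)$ with probability at most $1/100$; as $T\subseteq S$ we have $F(C'_T)\subseteq F(C'_S)$, so by a union bound $\bar{\mathbf{a}}\in F(C'_S)$ with probability at least $1-\frac{1}{3}-\frac{1}{100}$. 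Finally, given $\bar{\mathbf{a}}\in F(C'_S)$, \cref{cor:rpppc} (the security of $Trace'$ inherited from \cref{def:fpc}) yields Property~\ref{it:ss-sec}, and Property~\ref{it:ss-sound} follows from the soundness half of the FPC guarantee applied to $C'_{S_{-i}}$, exactly as in \cref{thm:main-pro-reid}.

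I expect the main obstacle to be making $\cB$'s lazy simulation provably perfect and pinning down the ``effective access to $C'_T$ only'' claim: one must check that at commitment time the previously revealed prefix and random pairs extend to a genuine $SS_{d'}(x_j)$ with the correct conditional distribution, and that a partially queried committed row (between $d'+1$ and $2d'$ queries) still leaks nothing beyond a deterministic function of $x_j$, so that \cref{lem:win-sample} may be invoked even though $\cA$ nominally receives the whole encoded database and queries committed rows only partially. Both reduce to the same structural feature of Shamir sharing --- the degree-$(2d'-1)$ polynomial is pinned down only once all $2d'$ real evaluations are revealed, while every evaluation point and every one of the $d'$ random evaluations is independent of the secret --- so once this bookkeeping is carried out carefully the rest is a routine adaptation of \cref{sec:ro}.
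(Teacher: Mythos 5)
Your proposal is correct and follows essentially the same route as the paper: the same PR-PPC construction and sampling distribution, the same lazy-commitment adversary that answers the first $d'$ queries to a row with fresh random field elements and pins down the Shamir polynomial on the $(d'+1)$-th query using the next unused codeword, and the same conclusion via the feasible-sample property (\cref{lem:win-sample}) and the $Trace'$ guarantees (\cref{cor:rpppc}). Your explicit counting argument that at most $\lfloor cd'/(d'+1)\rfloor \le c-1$ rows can cross the commitment threshold is in fact a cleaner justification of the invariant $t \le c$ that the paper merely asserts.
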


 Let $(Gen, Trace)$ be the promised $(n,d,c)$-fingerprinting code in the theorem statement. We first construct a PR-PPC $(n,d,c,\ell)$-FPC with $\ell:=100d$ (see \cref{sec:prppc} for details). 
 
 The distribution $\cC_\cS$ on $c$-row databases is implicitly defined through the sampling process below
 \begin{enumerate}
 \item Let $C' \leftarrow Gen'(n,d')$ (see \cref{alg:gen'}) where $C' \in \{0,1\}^{n \times d'}$ and $d'=101d$. 
     \item Let $C'_S =(x_1, \ldots, x_c)\in \{0,1\}^{c \times d'}$ be the first $c$ rows of $C' \in \{0,1\}^{n\times d'}$
     \item Output $C'_S$
 \end{enumerate}
Next we define the privacy adversary $\cB$.

{\bfseries Adversary $\cB$ Algorithm. } Let $\bbF$ be a finite field of order $q'$ where $q' >4d'$. Adversary $\cB$ receives $C'_S=(x_1, \ldots, x_c)$ as input and feeds the algorithm $\cA$ an input instance $C'_\cB$ of $\cP_{SS,d'}(C')$ by simulating answers to attribute queries made by $\cA$ as described in Step \ref{item:b-sim-ss} below. $\cB$ then uses the rounded answer returned by $\cA$ (Step \ref{item:a-rec}) to obtain an individual in the coalition by invoking $Trace'$ in Step \ref{item:trace}. 

\begin{enumerate}
	\item Initialize $q_i = 0$ for each row $i \in [n]$ and initialize a counter $t=0$.
	\item Simulate the oracle algorithm $\cA$ with query access to an $(n \times 2d')$ database $C'_\cB$:
	\begin{enumerate}
		\item \label{item:b-sim-ss} When $\cA$ makes a fresh query $(i,j)$, update $q_i = q_i+1$ and 
		\begin{itemize}
			\item If $q_i \leq d'$, then set $b = q_i + d'$. Respond with a random pair of field elements $(\alpha_{b}^{(i)}, z_{b}^{i})$. Record this tuple.
			\item If $q_i = d'+1$, then
			\begin{enumerate}
				\item Increment $t$ by one.
				\item Define the entire polynomial $p_i$ randomly, subject to the constraints that it is consistent with row $x_t$ and the previous responses sent for row $i$: $p_i(\alpha_j^{(i)}) = x_{t,j}$ for $j \leq d'$ and $p_i(\alpha_{b}^{(i)}) =  z_{b}^{i}$ for $j>d'$. 
				\item Send $\cA$ the response $(\alpha_{j+d'}^{(i)}, p_i(\alpha_{j+d'}^{(i)}))$.
			\end{enumerate}
			\item If $q_i > d'+1$, then the polynomial $p_i$ is already defined. Send the response  $(\alpha_{j+d'}^{(i)}, p_i(\alpha_{j+d'}^{(i)}))$.
		\end{itemize}
		\item \label{item:a-rec} When $\cA$ outputs a vector $\mathbf{a} \in [0,1]^{d'}$, round its entries to ${0,1}$ and call it $\bar{\mathbf{a}}\in \{0,1\}^{d'}$.
		\item \label{item:trace} Return $Trace'(\bar{\mathbf{a}})$.
	\end{enumerate}
\end{enumerate}

We emphasize that although algorithm $\cA$ can make attribute queries to more than $c$ rows, the adversary $\cB$ never defines a secret sharing polynomial for more than $t \leq c$ rows of the input $C'_S$.

\begin{lemma}\label{clm:success-A-ss}
    Suppose $\cA$ is $(1/3,1/3)$-accurate for $\cP_{SS,d'}(C')$. Then the rounded output $\mathbf{\bar{a}}$ from algorithm $\cA$ is a feasible codeword for both $C'$ and $C'_S$ with probability at least $1-\frac{1}{3}-\frac{1}{100}$.  
    
    In other words, with probability at least $1-\frac{1}{3}-\frac{1}{100}$, $\mathbf{\bar{a}} \in F(C')$ and $\mathbf{\bar{a}} \in F(C'_S)$.
\end{lemma}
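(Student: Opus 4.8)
The plan is to follow the template of \cref{clm:success-A} almost verbatim, with the single change that the random-oracle bridge (\cref{clm:indis-pro}) is replaced by the information-theoretic security of the Shamir encoding. So I would first argue that the input $C'_\cB$ that $\cB$ hands to $\cA$ is \emph{perfectly indistinguishable} from a genuine instance of $\cP_{SS,d'}(C')$, so that the hypothesis ``$\cA$ is $(1/3,1/3)$-accurate for $\cP_{SS,d'}(C')$'' applies to the simulated run. Just as in the warm-up, one sets up a Real World (what $\cA$ sees from $\cB$) against an Ideal World ($\cA$ interacting with a real encoding of $C'$, answering arbitrary attribute queries) and shows the induced distributions on (transcript, output) coincide. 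Two facts do the work: (i) by \cref{thm:sec-A} applied with dimension $d'$, the answers to the first $d'$ attribute queries of any row are i.i.d.\ uniform field-element pairs, independent of the underlying codeword row, so $\cB$'s policy of answering with fresh random pairs until the $(d'+1)$-st query is faithful; and (ii) since $Gen'$ begins by applying the uniformly random row permutation $\pi_R$ (\cref{alg:gen'}), the codeword row occupying a given database position is a uniformly random not-yet-revealed codeword, which is exactly the distribution of the row $x_t$ that $\cB$ commits to in query order. Because $\cA$'s attribute-query budget is $cd'$ and each fully decoded row costs at least $d'+1$ queries, $\cA$ fully decodes fewer than $c$ rows, so $\cB$ never exhausts $x_1,\dots,x_c$ and the simulation is always well defined; moreover $\cA$ only ever learns the codeword rows it fully decodes, so it genuinely falls under the hypothesis of \cref{def:feasible-sample}.

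Once indistinguishability is in place, the remainder is identical to \cref{clm:success-A}. Since $\cA$ is $(1/3,1/3)$-accurate for $\cP_{SS,d'}(C')$, with probability at least $2/3$ its output $\mathbf{a}$ satisfies $|\mathbf{a}_j-a_j|\le 1/3$ for every column $j$, where $a$ is the exact one-way marginal of $C'$; hence on every all-$1$ column of $C'$ we get $\mathbf{a}_j\ge 2/3$, so $\bar{\mathbf{a}}_j=1$, and symmetrically $\bar{\mathbf{a}}_j=0$ on every all-$0$ column, i.e.\ $\bar{\mathbf{a}}\in F(C')$ with probability at least $2/3$. Next, since the construction uses a PR-PPC $(n,d,c,\ell)$-FPC with $\ell=100d$, I invoke the feasible sample property (\cref{lem:win-sample}): applying it with the ``sample'' being the (at most $c$) rows $\cA$ actually fully decodes --- a set $T\subseteq S$, and noting $F(C'_T)\subseteq F(C'_S)$ --- we get that, conditioned on $\bar{\mathbf{a}}\in F(C')$, also $\bar{\mathbf{a}}\in F(C'_S)$ except with probability at most $d/\ell=1/100$. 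A union bound over the two failure events yields $\bar{\mathbf{a}}\in F(C')\cap F(C'_S)$ with probability at least $1-\tfrac13-\tfrac1{100}$.

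I expect the indistinguishability step to be the main obstacle: one must argue carefully that $\cB$'s adaptive, lazy commitment of codeword rows \emph{in query order} induces exactly the distribution of a genuine $\cP_{SS,d'}(C')$ encoding, even though in a real encoding each database position is tied to a fixed codeword row while $\cB$ decides the assignment on the fly. The resolution is the coupling supplied by the random row permutation $\pi_R$ together with the perfect hiding of the first $d'$ Shamir evaluations; and one must verify that adaptivity of $\cA$'s queries cannot break this coupling --- it cannot, because the first $d'$ answers per row are independent of $\pi_R$ and of all codeword rows, so whichever position triggers the next reveal, the revealed row is still uniform among the unrevealed codewords.
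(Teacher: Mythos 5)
Your proposal is correct and follows essentially the same route as the paper: accuracy forces $\bar{\mathbf{a}}\in F(C')$ on every all-constant column, then \cref{lem:win-sample} with $\ell=100d$ gives $\bar{\mathbf{a}}\in F(C'_S)$ up to an additional $1/100$ failure probability, and a union bound finishes. The only difference is that you make explicit the lazy-sampling/perfect-simulation argument (via \cref{thm:sec-A} and the row permutation $\pi_R$) needed to apply the accuracy hypothesis to the simulated run; the paper handles this implicitly through the construction of $\cB$ and the observation that $t\le c$, so your extra care is a refinement rather than a different approach.
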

\begin{proof}
Assuming that $\cA$ is $(1/3,1/3)$-accurate for $\cP_{SS,d'}(C')$, we first show that $\mathbf{\bar{a}}$ is a feasible codeword for $C'$ with probability at least 2/3. By the accuracy guarantee of $\cA$, we know that for any column $i_j$ with probability at least 2/3, $\vert \mathbf{a}_{i_j} - a_{i_j} \vert \leq 1/3$ where $a_{i_j}$ is the actual one-way marginal for column $i_j$. Thus for any column $i_j$ of all 1's in $C'$, $\mathbf{a}_{i_j} \geq 2/3$ which means $\mathbf{\bar{a}}_{i_j}=1$, thus satisfying the marking condition. A similar argument holds for the case when a column is all 0's.

Next, using the fact that we use a PR-PPC $(n,d,c, 100d)$-FPC and that $\mathbf{\bar{a}}\in F(C')$ with probability at least 2/3, we can invoke \cref{lem:win-sample} which states that the feasible sample property is satisfied by our PR-PPC FPC construction. Note that in our case, the sampling algorithm described in \cref{def:feasible-sample} is $\cA$ together with the postprocessing step of rounding the output of $\cA$. Also, even though $\cA$ takes the entire dataset as input, it effectively only has access to the rows of the underlying sample via queries to $\cB$ and thus satisfies the properties required in \cref{def:feasible-sample}. 
In particular, recall that the adversary maintains the invariant $t \leq c$.
\cref{lem:win-sample} states that with probability $\leq \frac{1}{100}$, $\bar{\mathbf{a}}$ is not a feasible codeword for $C'_S$. By a union bound we have that $1-\frac{1}{3}-\frac{1}{100}$, $\mathbf{\bar{a}}$ must be a feasible codeword for $C'_S$. 
\end{proof}

\begin{proof}[Proof of \cref{thm:main-pro-reid-ss}]
From the above \cref{clm:success-A-ss}, we have that $\cA$ is $(1/3,1/3)$-accurate for $\cP_{SS,d'}(C')$ implies that $\bar{\textbf{a}}$ is a feasible codeword for $C'_S$.  By the security of the underlying $(n,d,c)$-fingerprinting code and the corresponding security guarantee of the PR-PPC $(n,d,c,100d)$-FPC given by \cref{cor:rpppc}, we have that $\Pr[\bar{\textbf{a}} \in  F(C'_S) \wedge Trace'(\bar{\textbf{a}} ) = \perp ]\leq \xi$. Since $\cB$ releases the output of $Trace'(\mathbf{\bar{a}})$, the event $\cB^{\cA}(C'_S) = \perp$ is identical to $Trace'(\bar{\textbf{a}} ) = \perp$. Thus Property \ref{it:reid-sec} of the theorem statement which states that the probability that $\cB$ outputs $\perp$ is bounded by $\xi$ follows. Property \ref{it:reid-sound} follows directly from the soundness property of the fingerprinting code. 
\end{proof}

\begin{corollary}\label{corol:atleast-A}
    $\cA$ must make at least $c\cdot d'$ attribute queries to $C_\cB$ to obtain $c$ rows of $C'_S$ where $d'=101d$. 
\end{corollary}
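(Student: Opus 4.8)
The plan is to obtain \cref{corol:atleast-A} as a direct consequence of the information-theoretic security of the Shamir encoding, namely \cref{thm:sec-A} read with the parameter $d'$ in place of $d$. The single structural fact that drives everything is that the shares of distinct rows are generated from independent randomness (each row $i$ draws its own field elements $\alpha^{(i)}_j$ and its own pads $z^{(i)}_j$), so the only queries that can convey information about a given row $x_i$ are the attribute queries $\cA$ directs at row $i$ of the simulated database $C'_\cB$. It therefore suffices to prove a per-row statement and then sum.

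First I would recall that, by \cref{thm:sec-A} applied with $d'$ for $d$, the scheme $SS_{d'}$ is $(d',d',0)$-secure: for every $x$, an attacker making at most $d'$ queries to a share $SS_{d'}(x)$ has advantage exactly $0$ in the game of \cref{def:sec-game}. Concretely (this is the Real-World/Ideal-World computation inside the proof of \cref{thm:sec-A}), the length-$d'$ prefix $\alpha^{(i)}_1,\dots,\alpha^{(i)}_{d'}$ together with any $d'$ of the queryable pairs is distributed as $3d'$ uniform, independent field elements regardless of $x_i$; hence after at most $d'$ queries to row $i$ the value of $x_i$ is information-theoretically undetermined and in particular cannot be recovered. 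Equivalently, $\cA$ can ``obtain'' row $x_i$ only once it has issued its $(d'+1)$-th query to row $i$ --- which is exactly the point at which the simulation of $\cB$ performs the commit step $q_i = d'+1$ and binds that row to the next coalition codeword $x_t$.

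Next I would do the accounting across rows. Let $R$ be the set of rows of $C'_\cB$ to which $\cA$ issues at least $d'+1$ distinct attribute queries. By the previous paragraph $\cA$ obtains at most $|R|$ of the rows of $C'_S$. Since query sets to distinct rows are disjoint and each row of $R$ absorbs at least $d'+1$ queries, the total query count is at least $(d'+1)\,|R|$. Thus if $\cA$ obtains $c$ rows of $C'_S$ then $|R|\ge c$, whence $\cA$ makes at least $c(d'+1) \ge c\,d'$ attribute queries, which is the claimed bound. (Read contrapositively, the same counting yields the invariant used elsewhere: an algorithm with at most $c\,d'$ attribute queries causes $\cB$ to commit at most $c$ rows, so the simulated coalition always has size $\le c$ and the fingerprinting-code guarantees invoked in \cref{thm:main-pro-reid-ss} apply.)

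The only delicate point --- and the part I would write out most carefully --- is that the per-row security statement must survive the fact that $\cA$ simultaneously sees the transcripts of all the \emph{other} rows: formally one conditions on those transcripts and checks that the conditional distribution of row $i$'s share is unchanged, which is immediate from the independence of the per-row randomness. The adaptivity of $\cA$'s queries needs no extra work, since the attacker in \cref{def:sec-game} is already adaptive; so once the conditioning remark is in place the corollary is just the bookkeeping above.
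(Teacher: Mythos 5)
Your proposal is correct and follows essentially the same route as the paper: invoke the $(d',d',0)$-security of $SS_{d'}$ (\cref{thm:sec-A}) to conclude that each row of $C'_S$ costs at least $d'$ attribute queries, then sum over the $c$ rows using the disjointness of per-row query sets. The extra care you take with the independence of per-row randomness and with adaptivity is a welcome elaboration of details the paper leaves implicit, but it does not change the argument.
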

\begin{proof}
    Recall that \cref{thm:sec-A} states that $SS_{d'}$ is $(d',d',0)$-secure where $d'=101d$. Thus, in order to obtain each row of $C'_S$, $\cA$ must make at least $d'$ cell queries. The statement follows from the fact that $\cA$ queries for $c$ rows in total.

\end{proof}

\begin{lemma}\label{lem:acc-dp-ss}
Suppose there exists a distribution on $c \leq n$ row databases according to \cref{thm:main-pro-reid-ss}. Then there is no $(\eps,\delta)$-DP algorithm $\cA$ with row query complexity $c$ that is $(1/3,1/3)$-accurate for $\cP_{SS}$ with $\eps=O(1)$ and $\delta=o(1/c)$.  
\end{lemma}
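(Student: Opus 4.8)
The plan is to reuse, essentially verbatim, the argument proving \cref{lem:acc-dp}, with the $\cP_{SS}$ reidentification machinery in place of the $\cP_{RO}$ one. Suppose toward a contradiction that $\cA$ is a $(1/3,1/3)$-accurate, $(\eps,\delta)$-DP algorithm for $\cP_{SS}$ with the stated query budget (so that, by \cref{corol:atleast-A}, $\cA$ can recover at most $c$ of the rows of its input), $\eps = O(1)$, and $\delta = o(1/c)$. Applying \cref{thm:main-pro-reid-ss} to the promised $(n,d,c)$-fingerprinting code furnishes the distribution $\cC_\cS$ on $c$-row databases, the public row permutation $\pi_R$, and an adversary $\cB$ tailored to $\cA$.

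First I would lower bound the reidentification success probability. By \cref{clm:success-A-ss}, the rounded output $\bar{\mathbf a}$ of $\cA$ lies in $F(C'_S)$ except with probability $\tfrac{1}{3}+\tfrac{1}{100}$; on that event, Property~\ref{it:ss-sec} of \cref{thm:main-pro-reid-ss} (which follows from fingerprinting-code security via \cref{cor:rpppc}) gives that $\cB$ does not output $\perp$ except with probability $\xi$. Hence $\Pr_{C'_S\leftarrow\cC_\cS}[\pi_R(\cB^{\cA}(C'_S)) \in [c]] \ge 1 - \tfrac{1}{3} - \tfrac{1}{100} - \xi = \Omega(1)$. An averaging argument over the $c$ coordinates then produces an index $i^* \in [c]$ with $\Pr[\pi_R(\cB^{\cA}(C'_S)) = i^*] = \Omega(1/c)$. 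On the other hand, Property~\ref{it:ss-sound} with $\xi = o(1/c)$ yields $\Pr[\pi_R(\cB^{\cA}(C'_{S_{-i^*}})) = i^*] \le \xi = o(1/c)$.

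Now $C'_S$ and $C'_{S_{-i^*}}$ differ in exactly one row, and $\pi_R$ is a fixed data-independent permutation, so $\pi_R \circ \cB^{\cA}$ is $(\eps,\delta)$-DP iff $\cB^{\cA}$ is; the $\Omega(1/c)$-versus-$o(1/c)$ gap on this neighboring pair therefore shows that $\cB^{\cA}$, viewed as a mechanism on $c$-row databases, is not $(\eps,\delta)$-DP for any $\eps=O(1)$ and $\delta=o(1/c)$. It remains to push this back onto $\cA$, and this is the step I expect to be the main obstacle: I would argue that $\cB$'s simulation is neighbor-preserving, i.e. replacing one row of $C'_S$ changes at most one row of the simulated $\cP_{SS,d'}$ database $C'_\cB$ handed to $\cA$. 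The point is that the first $d'$ responses $\cB$ gives for any row of $C'_\cB$ are fresh uniform field elements sampled independently of the contents of $C'_S$, so the (randomized, query-order-dependent) rule matching a consumed row $x_t$ of $C'_S$ to a row index of $C'_\cB$ is itself independent of $C'_S$; conditioning on $\cB$'s and $\cA$'s coins thus fixes a bijection between the consumed rows of $C'_S$ (at most $c$ of them, by the counter $t$) and their embeddings in $C'_\cB$. Under this coupling $\cB^{\cA}$ is a post-processing of $\cA$ composed with a neighbor-preserving randomized encoding of $C'_S$, so if $\cA$ were $(\eps,\delta)$-DP then so would be $\cB^{\cA}$, contradicting the previous paragraph. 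Everything else is routine bookkeeping that parallels \cref{sec:ro}.
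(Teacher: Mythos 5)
Your proposal is correct and follows essentially the same route as the paper's proof: lower-bound the reidentification probability via \cref{clm:success-A-ss} and Property~\ref{it:ss-sec}, average to extract an index $i^*$ hit with probability $\Omega(1/c)$, contrast with Property~\ref{it:ss-sound} on the neighboring database, and transfer the DP violation from $\cB^{\cA}$ back to $\cA$. The only difference is that you spell out the neighbor-preservation of $\cB$'s simulation, which the paper dispatches in one sentence as ``postprocessing''; your added care there is welcome but does not change the argument.
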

\begin{proof}
    Suppose $C'_S$ is sampled from the distribution on $c$-row databases $\cC_\cS$ and $\cB$ is the adversary from \cref{thm:main-pro-reid-ss}. From the lemma statement we know that $\cA$ is $(1/3,1/3)$-accurate, thus using \cref{clm:success-A-ss} and \cref{thm:main-pro-reid-ss}, we have that $\Pr[\pi_R(\cB^\cA(C'_S)) \in [c]] \geq 1-\frac{1}{3}-\frac{1}{100}-\xi \geq \Omega(1)$. By an averaging argument, this means that there exists some $i^* \in [c]$ for which $\Pr[\pi_R(\cB^\cA(C'_S))=i^*] \geq \Omega(1/c)$. However, if $\xi = o(1/c)$ by Property \ref{it:ss-sound} in \cref{thm:main-pro-reid-ss} we have that $\Pr[\pi_R(\cB^{\cA}(C'_{S_{-i^*}})) =i^*] \leq \xi = o(1/c)$. 
    
    In other words, the probability of $\cB^\cA$ outputting a fixed output $i^*$ on neighboring input databases $C'_{S}$ and $C'_{S_{-i^*}}$ is different which violates $(\eps,\delta)$-DP for any $\eps=O(1)$ and $\delta=o(1/c)$. We note here that since $\cA$ can make at most $c$ row queries, the DP guarantee for $\cA$ must hold for any neighboring sample of $c$ rows. Since $\cB$ does some postprocessing of the output from $\cA$, and we have shown that $\cB$ cannot be $(\eps,\delta)$-DP, this implies that $\cA$ cannot be $(\eps,\delta)$-DP for any $\eps=O(1)$ and $\delta=o(1/c)$.   

\end{proof}

\begin{proof}[Proof of \cref{thm:main-ss}]
Recall that \cref{lem:acc-dp-ss} states that if there exists a distribution $\cC_\cS$ on $c \leq n$ row databases, then there is no $(\eps,\delta)$-DP algorithm $\cA$ that is $(1/3,1/3)$-accurate for $\cP_{SS}$ with $\eps=O(1)$ and $\delta=o(1/c)$. From \cref{thm:main-pro-reid-ss},  such a distribution can be constructed from an $(n,d,c)$-fingerprinting code. Finally, invoking the Tardos construction for fingerprinting codes in \cref{thm:tardos}, we get that the row query complexity must be $c=\Omega(\sqrt{d/\log(d)})$. Using \cref{corol:atleast-A}, we know that the cell query complexity must be at least $c \cdot d' \geq \Omega(d\sqrt{d/\log(d)})$ where $d'=101d$. 
\end{proof}

\bibliographystyle{plainurl}
\bibliography{references}

\end{document}